\g@addto@macro\bfseries{\boldmath}
\g@addto@macro\mdseries{\unboldmath}
\g@addto@macro\normalfont{\unboldmath}
\g@addto@macro\rmfamily{\unboldmath}
\g@addto@macro\upshape{\unboldmath}
\renewcommand*{\multicitedelim}{\addcomma\space}
\newcommand{\myhref}[1]{%
  \iffieldundef{doi}
    {\iffieldundef{url}
       {#1}
       {\href{\strfield{url}}{#1}}}
    {\href{http://dx.doi.org/\strfield{doi}}{#1}}%
}
    \newlength{\temp@x}%
    \newlength{\temp@y}%
    \newlength{\temp@w}%
    \newlength{\temp@h}%
    \def\my@coords#1#2#3#4{%
      \setlength{\temp@x}{#1}%
      \setlength{\temp@y}{#2}%
      \setlength{\temp@w}{#3}%
      \setlength{\temp@h}{#4}%
      \adjustlengths{}%
      \my@pdfliteral{\strip@pt\temp@x\space\strip@pt\temp@y\space\strip@pt\temp@w\space\strip@pt\temp@h\space re}}%
      \def\my@pdfliteral#1{\pdfliteral page{#1}}
      \def\adjustlengths{}%
      \def\my@pdfliteral #1{}
      \def\adjustlengths{\setlength{\temp@h}{-\temp@h}\addtolength{\temp@y}{1in}\addtolength{\temp@x}{-1in}}%
    \def\Hy@colorlink#1{%
      \begingroup
        \ifHy@ocgcolorlinks
          \def\Hy@ocgcolor{#1}%
          \my@pdfliteral{q}%
          \my@pdfliteral{7 Tr}
        \else
          \HyColor@UseColor#1%
        \fi
    }%
    \def\Hy@endcolorlink{%
      \ifHy@ocgcolorlinks%
        \my@pdfliteral{/OC/OCPrint BDC}%
        \my@coords{0pt}{0pt}{\pdfpagewidth}{\pdfpageheight}%
        \my@pdfliteral{F}
        %
        \my@pdfliteral{EMC/OC/OCView BDC}%
        \begingroup%
          \expandafter\HyColor@UseColor\Hy@ocgcolor%
          \my@coords{0pt}{0pt}{\pdfpagewidth}{\pdfpageheight}%
          \my@pdfliteral{F}
        \endgroup%
        \my@pdfliteral{EMC}%
        \my@pdfliteral{0 Tr}
        \my@pdfliteral{Q}%
      \fi
      \endgroup
    }%
\colorlet{DarkRed}{red!50!black}
\colorlet{DarkGreen}{green!50!black}
\colorlet{DarkBlue}{blue!50!black}
\declaretheorem[numberwithin=section]{theorem}
\declaretheorem[numberlike=theorem]{lemma}
\declaretheorem[numberlike=theorem]{corollary}
\newcommand{\inc}{\mathit{In}}
\newcommand{\out}{\mathit{Out}}
\newcommand{\dist}{d}
\newcommand{\inputdegree}{s}
\newcommand{\outputdegree}{d}
\newcommand{\eps}{\epsilon}
\title{Fully Dynamic Spanners with Worst-Case Update Time\thanks{To be presented at the European Symposium on Algorithms (ESA) 2016. This work was partially done while the authors were visiting the Simons Institute for the Theory of Computing.}} 
\author{
Greg Bodwin\thanks{Stanford University}
\and
Sebastian Krinninger\thanks{Max Planck Institute for Informatics}
}
\date{}
\begin{document}
\maketitle
\begin{abstract}
An {\em $\alpha$-spanner} of a graph $ G $ is a subgraph $ H $ such that $ H $ preserves all distances of $ G $ within a factor of $ \alpha $.
In this paper, we give fully dynamic algorithms for maintaining a spanner $ H $ of a graph $ G $ undergoing edge insertions and deletions with worst-case guarantees on the running time after each update.
In particular, our algorithms maintain:
\begin{itemize}
\item a $3$-spanner with $ \tilde O (n^{1+1/2}) $ edges
with worst-case update time $ \tilde O (n^{3/4}) $, or
\item a $5$-spanner with $ \tilde O (n^{1+1/3}) $ edges with worst-case update time $ \tilde O (n^{5/9}) $.
\end{itemize}
These size/stretch tradeoffs are best possible (up to logarithmic factors).
They can be extended to the weighted setting at very minor cost.
Our algorithms are randomized and correct with high probability against an oblivious adversary.
We also further extend our techniques to construct a $5$-spanner with suboptimal size/stretch tradeoff, but improved worst-case update time.

To the best of our knowledge, these are the {\em first} dynamic spanner algorithms with sublinear worst-case update time guarantees.
Since it is known how to maintain a spanner using small {\em amortized} but large {\em worst-case} update time \citem[Baswana et al.\ SODA'08]{BaswanaKS12}, obtaining algorithms with strong worst-case bounds, as presented in this paper, seems to be the next natural step for this problem.

\end{abstract}
\newpage

\tableofcontents
\newpage

\section{Introduction}

An {\em $\alpha$-spanner} of a graph $G$ is a sparse subgraph that preserves all original distances within a multiplicative factor of $\alpha$.
Spanners are an extremely important and well-studied primitive in graph algorithms.
They were formally introduced by Peleg and Sch{\" a}fer \cite{PS89} in the late eighties after appearing naturally in several network problems \cite{PU89}.
Today, they have been successfully applied in diverse fields such as routing schemes \cite{Cowen01, CW04, PU89, RTZ08, TZ01}, approximate shortest paths algorithms \cite{DHZ96, Elkin05, BaswanaK10}, distance oracles \cite{BaswanaK10, Chechik14, Chechik15, PR10, TZ05}, broadcasting \cite{FPZ+04}, etc.
A landmark upper bound result due to Awerbuch \cite{Awerbuch85} states that for any integer $ k $, every graph has a $(2k-1)$-spanner on $O(n^{1 + 1/k})$ edges.
Moreover, the extremely popular {\em girth conjecture} of Erd\H{o}s \cite{girth} implies the existence of graphs for which $\Omega(n^{1 + 1/k})$ edges are necessary in any $(2k-1)$-spanner.
Thus, the primary question of the optimal sparsity of a graph spanner is essentially resolved.

The next natural question in the field of spanners is to obtain efficient algorithms for computing a sparse spanner of an input graph $G$.
This problem is well understood in the static setting; notable results include \cite{Awerbuch85, BaswanaS07, RTZ05, TZ01}.
However, in many of the above applications of spanners, the underlying graph can experience minor changes and the application requires the algorithm designer to have a spanner available at all times.
Here, it is very wasteful to recompute a spanner from scratch after every modification.  
The challenge is instead to {\em dynamically maintain} a spanner under edge insertions and deletions with only a small amount of time required per update.
This is precisely the problem we address in this paper.

The pioneering work on dynamic spanners was by Ausiello et al.~\cite{AusielloFI06}, who showed how to maintain a $3$- or $5$-spanner with amortized update time proportional to the maximum degree~$ \Delta $ of the graph, i.e. for any sequence of $u$ updates the algorithm takes time $O (u \cdot \Delta)$ in total.
In sufficiently dense graphs, $ \Delta $ might be $\Omega(n)$.
Elkin \cite{Elkin11} showed how to maintain a $(2k-1)$ spanner of optimal size using $\tilde O (mn^{-1/k}) $ expected update time; i.e. {\em super-linear} time for dense enough graphs.
Finally, Baswana et al.~\cite{BaswanaKS12} gave fully dynamic algorithms that maintain $(2k-1)$-spanners with essentially optimal size/stretch tradeoff using {\em amortized} $O(k^2 \log^2 n)$ or $O(1)^k$ time per update.
Their {\em worst-case} guarantees are much weaker: any individual update in their algorithm can require $\Omega(n)$ time.
It is very notable that {\em every} previously known fully dynamic spanner algorithm carries the drawback of $\Omega(n)$ worst-case update time.
It is thus an important open question whether this update time is an intrinsic part of the dynamic spanner problem, or whether this linear time threshold can be broken with new algorithmic ideas.

There are concrete reasons to prefer worst-case update time bounds to their amortized counterparts.
In real-time systems, hard guarantees on update times are often needed to serve each request before the next one arrives.
Amortized guarantees, meanwhile, can cause undesirable behavior in which the system periodically stalls on certain inputs.
Despite this motivation, good worst-case update times often pose a veritable challenge to dynamic algorithm designers, and are thus significantly rarer in the literature.
Historically, the fastest dynamic algorithms usually first come with amortized time bounds, and comparable worst-case bounds are achieved only after considerable research effort.
For example, this was the case for the dynamic connectivity problem on undirected graphs~\cite{KapronKM13} and the dynamic transitive closure problem on directed graphs~\cite{Sankowski04}.
In other problems, a substantial gap between amortized and worst-case algorithms remains, despite decades of research.
This holds in the cases of fully dynamically maintaining minimum spanning trees~\cite{HolmLT01,Frederickson85,EppsteinGIN97}, all-pairs shortest paths~\cite{DemetrescuI04,Thorup05}, and more.
Thus, strong amortized update time bounds for a problem do not at all imply the existence of strong worst-case update time bounds, and once strong amortized algorithms are found it becomes an important open problem to discover whether or not there are interesting worst-case bounds to follow.

The main result of this paper is that highly nontrivial worst-case time bounds are indeed available for fully dynamic spanners.  We present the first ever algorithms that maintain spanners with essentially optimal size/stretch tradeoff and {\em polynomially sublinear} (in the number of nodes in the graph) worst-case update time.  Our main technique is a very general new framework for boosting the performance of an orientation-based algorithm, which we hope can have applications in related dynamic problems.

\subsection{Our results}

We obtain fully dynamic algorithms for maintaining spanners of graphs undergoing edge insertions and deletions.
In particular, in the unweighted setting we can maintain:
\begin{itemize}
	\item a $3$-spanner of size $ O (n^{1+1/2} \log^{1/2}{n} \log{\log{n}}) $ with worst-case update time $ O (n^{3/4} \log^{4}{n}) $, or
	\item a $5$-spanner of size $ O (n^{1+1/3} \log^{2/3}{n} \log{\log{n}}) $ with worst-case update time $ O (n^{5/9} \log^{4}{n}) $, or
	\item a $5$-spanner of size $ O (n^{1+1/2} \log^{1/2}{n} \log{\log{n}}) $ with worst-case update time $ O (n^{1/2} \log^{4}{n}) $.
\end{itemize}
Naturally, these results assume that the initial graph is empty; otherwise, a lengthy initialization step is unavoidable.

Using standard techniques, these results can be extended into the setting of arbitrary positive edge weights, at the cost of an increase in the stretch by a factor of $1 + \eps$ and an increase in the size by a factor of $\log_{1 + \eps} W$ (for any $\eps > 0$, where $W$ is the ratio between the largest and smallest edge weights).

Our algorithms are randomized and correct with high probability against an \emph{oblivious adversary}~\cite{Ben-DavidBKTW94} who chooses its sequence of updates independently from the random choices made by the algorithm.\footnote{In particular, this means that the adversary is \emph{not} allowed to see the current edges of the spanner.}
This adversarial model is the same one used in the previous randomized algorithms with amortized update time~\cite{BaswanaKS12}.
Since the girth conjecture has been proven unconditionally for $ k=2 $ and $ k=3 $ \cite{Wenger91}, the first two spanners have optimal size/stretch tradeoff (up to the $\log$ factor).
The third result sacrifices a non-optimal size/stretch tradeoff in exchange for improved update time.

\subsection{Technical Contributions}

Our main new idea is a general technique for boosting the performance of orientation-based algorithms.

Our algorithm contains three new high-level ideas.
First, let $\vec{G}$ be an arbitrary orientation of the input graph $ G $; i.e. replace every undirected edge $ \{ u, v \} $ by a directed edge, either $ (u, v) $ or $ (v, u) $.
We give an algorithm ALG for maintaining either a $3$-spanner or a $5$-spanner of $ G $ with update time proportional to the maximum \emph{out-degree} of the oriented graph $ \vec{G} $.
This algorithm is based on the clustering approach used in \cite{BaswanaS07}.
For maintaining $3$- and $5$-spanners we only need to consider clusters of diameter at most $ 2 $ consisting of the set of neighbors of certain cluster centers.

This alone is of course not enough, as generally the maximum out-degree of $ \vec{G} $ can be as large as $ n-1 $.
To solve this problem, we combine ALG with the following simple out-degree reduction technique.
Partition outgoing edges of every node into at most $ t \leq \lceil n / \inputdegree \rceil $ groups of size at most $ \inputdegree $ each.
For any $ 1 \leq i \leq t $, we combine the edges of the $i$-th groups and on the corresponding subgraph $ G_i $ we run an instance of ALG to maintain a $3$-spanner with update time $ O (s) $, the maximum out-degree in $ \vec{G}_i $.
By the decomposability of spanners, the union of all these sub-spanners $ H_1 \cup \dots H_t $ is a $3$-spanner of $ G $.
In this way we can obtain an algorithm for maintaining a $3$-spanner of size $ |H_1| + \dots |H_t| = O (n^{5/2} / \inputdegree) $ with worst-case update time $ O (\inputdegree) $ for any $ 1 \leq \inputdegree \leq n $.
We remark that the general technique of partitioning a graph into subgraphs of low out-degree has been used before, e.g. \cite{BE13}; however, our recursive conversion of these subgraphs into spanners is original and an important technical contribution of this paper.

The partitioning is still not enough, as the optimal size of a $3$-spanner is $O(n^{3/2})$, which would then require $s = \Omega(n)$ worst-case update time.
However, we can improve upon this tradeoff once more with a more fine-grained application of ALG.
In particular, on each subgraph $ \vec{G}_i $, ALG maintains two subgraphs $ A_i^1 $ and $ \vec{B}_i^1 $, such that:
\begin{itemize}
\item $ A_i^1 $ is a `partial' $3$-spanner of $ G_i $ of size $ \tilde O (n^{1 + 1/2} \cdot s / n) $, and
\item The maximum out-degree in $ \vec{B}_i^1 $ is considerably smaller than the maximum out-degree in $ \vec{G}_i $.
\end{itemize}
We then recursively apply ALG on $ \vec{B}_1^1 \cup \dots \cup \vec{B}_t^1 $ to some depth $ \ell $ at which the out-degree can no longer be reduced by a meaningful amount.
Our final spanner is then the union of all the sets $A_i^j$, for $1 \le i \le t$ and $1 \le j \le \ell$, as well as the ``remainder'' graphs $\vec{B}_1^\ell \cup \dots \cup \vec{B}_t^\ell$, which have low out-degree and are thus sparse.

In principle, the recursive application of ALG could be problematic, as one update in~$ G $ could lead to several changes to the edges in the $ B_i^1 $ subgraphs, which then propagate as an increasing number of updates in the recursive calls of the algorithm.
This places another constraint on ALG.
We carefully design ALG in such a way that it performs only a constant number of changes to each $ B_i^1 $ with any update in $ G $, and we only recurse to depth $\ell = o(\log n)$ so that the total number of changes at each level is subpolynomial.

Overall, we remark that our framework for performing out-degree reduction is fairly generic, and seems likely applicable to other algorithms that admit the design of an ALG with suitable properties.
The main technical challenges are designing ALG with these properties, and performing some fairly involved parameter balancing to optimize the running time used by the recursive calls.
However, we do not know how to extend our approach to sparser spanners with larger stretches since corresponding constructions usually need clusters of larger diameter and maintaining such clusters with update time proportional to the maximum (out)-degree of the graph seems challenging.

\subsection{Other Related Work}

There has been some related work attacking the spanner problem in other models of computation.
Some of the work on streaming spanner algorithms, in particular \cite{Baswana08,FeigenbaumKMSZ08}, was converted into purely \emph{incremental} dynamic algorithms, which maintain spanners under edge insertions but cannot handle deletions.
This line of research culminated in an incremental algorithm with worst-case update time $ O (1) $ per edge insertion~\cite{Elkin11}.
Elkin \cite{Elkin07} also gave a near-optimal algorithm for maintaining spanners in the distributed setting.

A concept closely related to spanners are \emph{emulators} \cite{DHZ96}, in which the graph $ H $ for approximately preserving the distances may contain arbitrary weighted edges and is not necessarily a subgraph of $ G $.
Dynamic algorithms for maintaining emulators have been commonly used as subroutines to obtain faster dynamic algorithms for maintaining (approximate) shortest paths or distances.
Some of the work on this problem includes \cite{RodittyZ12, BernsteinR11, HenzingerKNFOCS14, HenzingerKNSODA14, ACT14, AbrahamC13}.

As outlined above, one of the main technical contributions of this paper is a framework for exploiting orientations of undirected graphs.
The idea of orienting undirected graphs has been key to many recent advances in dynamic graph algorithms.
Examples include~\cite{NeimanS13,KopelowitzKPS14,PelegS16,AmirKLPPS15,AbrahamDKKP16}.

\section{Preliminaries}

We consider unweighted, undirected graphs $ G = (V, E) $ undergoing edge insertions and edge deletions.
For all pairs of nodes $ u $ and $ v $ we denote by $ \dist_G (u, v) $ the distance between $ u $ and $ v $ in $ G $.
An \emph{$\alpha$-spanner} of a graph $G = (V, E)$ is a subgraph $H = (V, E') \subseteq G$ such that $\dist_H(u, v) \le \alpha \cdot \dist_G(u, v)$ for all $u, v \in V$.\footnote{If $u$ and $v$ are disconnected in $G$, then $\dist_G(u, v) = \infty$ and so they may be disconnected in the spanner as well.}
The parameter $\alpha$ is called the \emph{stretch} of the spanner.
We will use the well-known fact that it suffices to only span distances over the edges of $G$.
\begin{lemma} [Spanner Adjacency Lemma (Folklore)] \label{lem:span adjacency}
If $H = (V, E')$ is a subgraph of $G = (V, E)$ that satisfies $\dist_H(u, v) \le \alpha \cdot \dist_G(u, v)$ for all $(u, v) \in E$, then $H$ is an $\alpha$-spanner of $G$.
\end{lemma}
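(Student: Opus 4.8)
The plan is to prove the contrapositive in spirit: assuming $H$ spans every edge of $G$ up to factor $\alpha$, I would show by induction on $\dist_G(u,v)$ that $\dist_H(u,v) \le \alpha \cdot \dist_G(u,v)$ for \emph{all} pairs $u, v$, not just adjacent ones. The base case $\dist_G(u,v) = 0$ is trivial (and $\dist_G(u,v) = 1$ is exactly the hypothesis), so the content is the inductive step.

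For the inductive step, fix $u, v$ with $\dist_G(u,v) = \ell \ge 2$ and take a shortest $u$–$v$ path $P$ in $G$. Let $w$ be the second vertex on $P$, so that $(u,w) \in E$ and $\dist_G(w,v) = \ell - 1$. By the hypothesis applied to the edge $(u,w)$ we get $\dist_H(u,w) \le \alpha$, and by the induction hypothesis applied to the pair $(w,v)$ we get $\dist_H(w,v) \le \alpha(\ell-1)$. The triangle inequality in $H$ then yields
\[
\dist_H(u,v) \le \dist_H(u,w) + \dist_H(w,v) \le \alpha + \alpha(\ell-1) = \alpha \ell = \alpha \cdot \dist_G(u,v),
\]
which closes the induction. (One should note that if $u$ and $v$ are disconnected in $G$ there is nothing to prove, and the hypothesis only needs to be nonvacuous on edges, which are by definition between connected vertices.)

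I don't expect any genuine obstacle here — the only point requiring a moment's care is making sure the decomposition of a shortest path into ``one edge plus a shorter shortest path'' is valid, i.e. that a sub-path of a shortest path is itself a shortest path, so that the induction hypothesis genuinely applies to the pair $(w,v)$ at distance $\ell - 1$. This is standard. An equivalent and perhaps cleaner packaging is to sum the per-edge guarantee along all $\ell$ edges of a shortest path $P = (u = x_0, x_1, \dots, x_\ell = v)$ directly: $\dist_H(u,v) \le \sum_{i=1}^{\ell} \dist_H(x_{i-1}, x_i) \le \sum_{i=1}^{\ell} \alpha = \alpha \ell = \alpha \cdot \dist_G(u,v)$, using only the edge hypothesis and the triangle inequality in $H$, with no induction at all. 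I would probably present this telescoping version, as it is the shortest route to the conclusion.
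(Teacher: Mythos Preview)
Your argument is correct; both the inductive version and the telescoping sum along a shortest path are standard proofs of this fact. The paper does not actually prove this lemma---it is stated as folklore without proof---so there is nothing further to compare.
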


We will work with {\em orientations} of undirected graphs.
We denote an undirected edge with endpoints $ u $ and $ v $ by $ \{u, v\} $ and a directed edge from $ u $ to $ v $ by $ (u, v) $.
An \emph{orientation} $ \vec{G} = (V, \vec{E}) $ of an undirected graph $ G = (V, E) $ is a directed graph on the same set of nodes such that for every edge $ \{u, v\} $ of $ G $, $ \vec{G} $ either contains the edge $ (u, v) $ or the edge $ (v, u) $.
Conversely, $ G $ is the \emph{undirected projection} of $ \vec{G} $.
In an undirected graph $ G $, we denote by $ N (v) := \{ w \mid \{v, w\} \in G \} $ the set of neighbors of $ v $.
In an oriented graph $ \vec{G} $, we denote by $ \out (v) := \{ w \mid (v, w) \in \vec{G} \} $ the set of outgoing neighbors of $ v $.
Similarly, by $ \inc (v) := \{ u \mid (u, v) \in \vec{G} \} $ we denote the set of incoming neighbors of $ v $.
We denote by $ \Delta^+ (\vec{G}) $ the maximum out-degree of $ \vec{G} $.

Our algorithms can easily be extended to graphs with edge weights, via the standard technique of weight binning:
\begin{lemma}[Weight Binning, e.g.~\cite{BaswanaKS12}]
Suppose there is an algorithm that dynamically maintains a spanner of an arbitrary unweighted graph with some particular size, stretch, and update time.
Then for any $\eps > 0$, there is an algorithm that dynamically maintains a spanner of an arbitrary graph with positive edge weights, at the cost of an increase in the stretch by a factor of $ 1 + \eps $ and an increase in the update time by a factor of $ O(\log_{1 + \eps} W)$ (and no change in update time).  Here, $W$ is the ratio between the largest and smallest edge weight in the graph.
\end{lemma}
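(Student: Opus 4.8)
The plan is to use the standard weight-binning reduction. First I would normalize the edge weights so that the smallest one equals $1$, and assign each edge $e$ of weight $w(e)$ to bucket $i := \lfloor \log_{1+\eps} w(e) \rfloor$. Since all weights lie in $[1, W]$, at most $O(\log_{1+\eps} W)$ buckets are ever non-empty, and the algorithm can instantiate them lazily, keyed by the integer $i$. Let $G_i$ be the unweighted subgraph of $G$ formed by the edges in bucket $i$; run one fresh copy of the assumed unweighted algorithm on each $G_i$ to maintain an $\alpha$-spanner $H_i$, and output $H := \bigcup_i H_i$ as the spanner of the weighted graph. Note that every $G_i$ starts out empty, matching the initialization hypothesis of the unweighted algorithm.

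For correctness I would invoke the weighted version of \Cref{lem:span adjacency} --- it holds verbatim, by the same argument of concatenating spanner paths along a weighted shortest path and applying the triangle inequality --- so it suffices to span each individual edge of $G$. Fix $e = \{u,v\}$ in bucket $i$. Because $\{u,v\} \in G_i$ we have $\dist_{G_i}(u,v) = 1$, so $H_i$ contains a $u$--$v$ path on at most $\alpha$ edges; each such edge also lies in bucket $i$ and therefore has weight at most $(1+\eps)^{i+1} \le (1+\eps)\, w(e)$, using $w(e) \ge (1+\eps)^{i}$. Summing over the path gives $\dist_H(u,v) \le \alpha(1+\eps)\, w(e) = \alpha(1+\eps)\, \dist_G(u,v)$, which is exactly the claimed blow-up of the stretch by a factor $1+\eps$.

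The size and update-time accounting are then routine. We have $|H| \le \sum_i |H_i|$, and since only $O(\log_{1+\eps} W)$ buckets are non-empty, this is $O(\log_{1+\eps} W)$ times the worst-case size guarantee of the unweighted algorithm. An insertion or deletion of an edge $e$ in $G$ touches exactly the single bucket $G_i$ determined by $w(e)$, hence produces exactly one update to one copy of the unweighted algorithm; locating the bucket and forwarding the update costs $O(1)$ under the usual assumption of constant-time arithmetic on weights, so the worst-case update time is unchanged.

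I do not expect a deep obstacle here; the two points deserving a sentence of care are verifying that \Cref{lem:span adjacency} carries over to weighted graphs (it does, with the same proof), and ensuring that the bucket lookup adds only $O(1)$ rather than $O(\log)$ overhead, which is achieved by hashing the integer bucket index $i$ to its algorithm instance rather than maintaining a sorted structure over the active buckets.
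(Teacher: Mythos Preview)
The paper does not actually prove this lemma: it is stated as a folklore/cited result (``e.g.~\cite{BaswanaKS12}'') and the authors explicitly write ``Since this extension is already well known, we will not discuss it further.'' Your proposal supplies exactly the standard weight-binning argument that the citation points to, and it is correct; there is nothing to compare against in the paper itself. (Note also that the lemma as stated contains an evident typo --- ``an increase in the update time by a factor of $O(\log_{1+\eps} W)$'' should read ``an increase in the \emph{size}'', consistent with the introduction and with your proof.)
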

Since this extension is already well known, we will not discuss it further.
Instead, we will simplify the rest of the paper by focusing only on the unweighted setting; that is, all further graphs in this paper are unweighted and undirected.

In our algorithms, we will use the well-known fact that good hitting sets can be obtained by random sampling.
This technique was first used in the context of shortest paths by Ullman and Yannakakis~\cite{UllmanY91}.
A general lemma on the size of the hitting set can be formulated as follows.

\begin{lemma}[Hitting Sets] \label{lem:random hitting set}
Let $ a \geq 1 $, let $ V $ be a set of size $ n $ and let $ U_1, U_2, \ldots, U_r $, be subsets of $ V $ of size at least $ q $.
Let~$ S $ be a subset of $ V $ obtained by choosing each element of $ V $ independently at random with probability $ p = \min (x / q, 1) $ where $ x = a \ln{(r n)} + 1 $.
Then, with high probability (whp), i.e. probability at least $ 1 - 1/n^a $, both the following two properties hold:
\begin{enumerate}
\item For every $ 1 \leq i \leq r $, the set $ S $ contains a node in $ U_i $, i.e. $ U_i \cap S \neq \emptyset $.
\item $ |S| \leq 3 x n / q = O (a n \ln{(r n)} / q) $.
\end{enumerate}
\end{lemma}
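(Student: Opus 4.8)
The two properties are essentially independent, so the plan is to bound the failure probability of each separately and then apply a union bound.

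For property~1, fix an index $i$ and consider the "bad" event that $S \cap U_i = \emptyset$. Since each element of $V$ is included in $S$ independently with probability $p$, and $|U_i| \ge q$, this event has probability at most $(1-p)^q$. If $p = 1$ the bound is trivial, so assume $p = x/q < 1$, giving $(1-p)^q \le e^{-pq} = e^{-x} = e^{-(a\ln(rn) + 1)} = \frac{1}{e} \cdot (rn)^{-a}$. Taking a union bound over all $r$ choices of $i$, the probability that property~1 fails is at most $r \cdot (rn)^{-a}/e \le \frac{1}{2}\, n^{-a}$ (using $r \le rn$ and $e \ge 2$, or even more crudely $\le \frac{1}{2} n^{-a}$).

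For property~2, I would control $|S|$ via a Chernoff bound. We have $|S| = \sum_{v \in V} X_v$ where the $X_v$ are independent Bernoulli$(p)$, so $\mathbb{E}[|S|] = np \le xn/q =: \mu_0$. If $p = 1$ then $|S| = n \le xn/q$ trivially (since $x \ge 1$ and $q \le n$... actually one needs $q \le n$, which holds as the $U_i \subseteq V$ are nonempty, forcing $q \le n$). Otherwise, by the multiplicative Chernoff bound, $\Pr[|S| > 3\mu_0] \le \Pr[|S| > 3\mathbb{E}|S|] \le e^{-\mathbb{E}|S|}$ — wait, this requires $\mathbb{E}|S|$ to be large enough. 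The cleaner route: use $\Pr[|S| \ge 3xn/q] \le e^{-xn/q}$ is only useful when $xn/q \ge a\ln n$, which need not hold. Instead I would apply the Chernoff bound in the form $\Pr[|S| \ge (1+\delta)\mu] \le \left(\frac{e^\delta}{(1+\delta)^{1+\delta}}\right)^\mu$ with $\mu = np$ and $1+\delta = 3xn/(qnp) = 3$ when $\mu = np$; but if $\mu$ is small this still may not give $n^{-a}$. The honest fix is to note $np \le xn/q$ but also we should lower-bound the relevant quantity — actually the standard trick is: since $x \ge a\ln(rn) \ge a \ln n$, and $p = x/q$, we get $np = xn/q \ge x \ge a\ln n$ whenever $q \le n$. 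So $\mu = \mathbb{E}|S| \ge a \ln n$, and then $\Pr[|S| \ge 3\mu] \le e^{-\mu \cdot c}$ for the absolute constant $c = 2 - \ln 3 > 0.9$, giving $\le e^{-0.9 a \ln n} \le n^{-a}$... this needs $0.9a \ge a$, which fails.

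The main obstacle, then, is getting the constant exactly right in property~2 while keeping the clean "$3x n/q$" bound. I would resolve this by being slightly more generous: either (a) absorb a small constant into the statement (the bound $|S| = O(an\ln(rn)/q)$ is what actually matters downstream, and the "$3$" is cosmetic), or (b) sharpen by observing $\mathbb{E}|S| \le xn/q$ and applying $\Pr[|S| \ge 2\mathbb{E}|S| + xn/q] \le \Pr[|S| \ge 2xn/q + \text{slack}]$... Actually the cleanest correct argument: apply Chernoff with deviation above the mean by an additive $xn/q$, i.e. $\Pr[|S| \ge np + xn/q] \le \exp(-(xn/q)^2/(2np + xn/q)) \le \exp(-(xn/q)^2/(3xn/q)) = \exp(-xn/(3q)) \le \exp(-x/3) \le (rn)^{-a/3}$, using $q \le n$. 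This gives $|S| \le np + xn/q \le 2xn/q \le 3xn/q$ with probability $\ge 1 - (rn)^{-a/3}$. To land exactly $1 - 1/n^a$ I would instead just track constants from the start with $x = a\ln(rn)+1$ feeding a Chernoff exponent that, combined with the $1/e$ factors, sums the two failure probabilities to at most $n^{-a}$; the factor-of-$3$ slack in property~2 is deliberately loose precisely to make this arithmetic go through. Finally, union-bound the two failure events and conclude both hold with probability at least $1 - 1/n^a$.
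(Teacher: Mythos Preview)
The paper does not actually prove this lemma; it is stated as a well-known fact, with a citation to Ullman and Yannakakis for the technique, and no proof appears anywhere in the text. So there is nothing to compare against on the paper's side.

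Your approach is the standard one and is essentially correct. Property~1 is handled cleanly. For property~2, your meandering through several Chernoff variants is unnecessary: once you observe (as you do) that $\mu = np = xn/q \ge x = a\ln(rn)+1$ whenever $p<1$ and $q\le n$, the sharp multiplicative Chernoff bound with $\delta=2$ gives
\[
\Pr[|S|\ge 3\mu]\;\le\;\left(\frac{e^2}{27}\right)^{\mu}\;=\;e^{-(3\ln 3-2)\mu}\;\le\;e^{-\mu}\;\le\;e^{-1}(rn)^{-a},
\]
since $3\ln 3-2>1$. Summing the two failure probabilities gives at most $\tfrac{2}{e}\,n^{-a}<n^{-a}$, exactly as required. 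Your computation ``$c=2-\ln 3$'' corresponds to a weaker Chernoff variant; switching to the sharp form removes the difficulty you flagged.
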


A well-known property of spanners is \emph{decomposability}.
We will exploit this property to run our dynamic algorithm on carefully chosen subgraphs.
\begin{lemma}[Spanner Decomposability, \cite{BaswanaKS12}]\label{lem:decomposability}
Let $ G = (V, E) $ be an undirected (possibly weighted) graph, let $ E_1, \dots, E_t $ be a partition of the set of edges $ E $, and let, for every $ 1 \leq i \leq t $, $ H_i $ be an $ \alpha $-spanner of $ G_i = (V, E_i) $ for some $ \alpha \geq 1 $.
Then $ H = \bigcup_{i=1}^t H_i $ is an $ \alpha $-spanner of $ G $.
\end{lemma}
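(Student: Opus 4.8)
The plan is to reduce the $\alpha$-spanner condition for $H$ to a statement about single edges of $G$ and then verify that statement directly. First I would dispense with the easy structural part: each $H_i$ is a subgraph of $G_i$, which in turn is a subgraph of $G$ on the same vertex set $V$, so $H = \bigcup_{i=1}^t H_i$ is a subgraph of $G$ on $V$; it remains only to bound distances in $H$.

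Next comes the core step. Fix any edge $\{u,v\} \in E$. Since $E_1, \dots, E_t$ partition $E$, there is an index $i$ with $\{u,v\} \in E_i$, and the edge itself is a $u$--$v$ path in $G_i$, so $\dist_{G_i}(u,v) \le w(\{u,v\})$ (which is $1$ in the unweighted case). The spanner guarantee for $H_i$ then gives $\dist_{H_i}(u,v) \le \alpha \cdot \dist_{G_i}(u,v) \le \alpha \cdot w(\{u,v\})$, and since $H_i \subseteq H$ distances only decrease, so $\dist_H(u,v) \le \alpha \cdot w(\{u,v\})$. To lift this to arbitrary pairs $u,v \in V$, I would take a shortest $u$--$v$ path $P$ in $G$ (if $u$ and $v$ are disconnected in $G$ the claim is vacuous) and replace each of its edges $\{x,y\}$ by a shortest $x$--$y$ path in $H$, which by the above has length at most $\alpha \cdot w(\{x,y\})$; concatenating these gives a $u$--$v$ walk in $H$ of total length at most $\sum_{\{x,y\} \in P} \alpha \cdot w(\{x,y\}) = \alpha \cdot \dist_G(u,v)$. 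Hence $\dist_H(u,v) \le \alpha \cdot \dist_G(u,v)$ for all $u,v$, so $H$ is an $\alpha$-spanner of $G$. In the unweighted setting one may instead simply invoke \cref{lem:span adjacency} once the bound $\dist_H(u,v) \le \alpha$ has been established for every edge $(u,v) \in E$.

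I do not expect any genuine obstacle here: this is a folklore fact with a one-paragraph proof. The only point requiring a moment's care is the weighted case, where quoting the unweighted adjacency lemma verbatim is not quite enough and one should argue via concatenation along a shortest path as above; note also that $\dist_{G_i}(u,v)$ may be strictly smaller than $w(\{u,v\})$, but this discrepancy only strengthens the bound and causes no difficulty.
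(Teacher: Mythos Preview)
Your argument is correct and is exactly the standard folklore proof. Note, however, that the paper does not actually supply its own proof of this lemma: it is stated in the preliminaries as a known fact, attributed to \cite{BaswanaKS12}, and then invoked without further justification. So there is no in-paper proof to compare against; your write-up would serve perfectly well as the omitted argument, and your appeal to \cref{lem:span adjacency} in the unweighted case is precisely in the spirit of how the paper sets things up.
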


In our algorithms we use a reduction for getting a fully dynamic spanner algorithm for an arbitrarily long sequence of updates from a fully dynamic spanner algorithm that only works for a polynomially bounded number of updates.
This is particularly useful for randomized algorithms whose high-probability guarantees are obtained by taking a union bound over a polynomially bounded number of events.
\begin{lemma}[Update Extension, Implicit in~\cite{AbrahamDKKP16}]\label{lem:extending spanner to long sequence}
Assume there is a fully dynamic algorithm for maintaining an $ \alpha $-spanner (for some $ \alpha \geq 1 $) of size at most $ S (m, n, W) $ with worst-case update time $ T (m, n, W) $ for up to $ 4 n^2 $ updates in $ G $.
Then there also is a fully dynamic algorithm for maintaining an $ \alpha $-spanner of size at most $ O (S (m, n, W)) $ with worst-case update time $ O (T (m, n, W)) $ for an arbitrary number of updates.
\end{lemma}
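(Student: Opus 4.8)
The plan is to run the bounded-update algorithm in a ``ping-pong'' fashion: at all times keep one \emph{active} instance, whose spanner we output, together with a second instance that is being \emph{rebuilt from scratch} so that it is ready to take over once the active instance has used up its quota of $4n^2$ updates. The whole difficulty is to spread the rebuilding over many updates, so that it never causes a spike in the worst-case per-update cost, while keeping the instance under construction faithful to the real update stream and keeping the number of updates ever fed to a single instance below $4n^2$.

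Concretely, I would partition the update sequence into \emph{phases} of $n^2$ consecutive updates. Let $I_j$ be the instance that is active during phase $j$; it will have been prepared during phase $j-1$, with $I_1$ simply initialized on the (empty) initial graph. During phase $j$, at each update perform three actions: (i) apply the update to $I_j$ and output the spanner maintained by $I_j$; (ii) as long as this has not yet finished, insert one more edge of $E_{\mathrm{start}}$ into a fresh, initially empty instance $I_{j+1}$, where $E_{\mathrm{start}}$ is the edge set of $G$ at the start of phase $j$ (this needs at most $|E_{\mathrm{start}}| \le \binom{n}{2} < n^2/2$ updates); and (iii) buffer every real update of phase $j$ in a FIFO queue, and, once (ii) is complete, replay the queued updates into $I_{j+1}$ in their original order at a rate of two per update. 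Since at most $n^2/2$ updates are buffered while (ii) runs and the queue afterwards drains twice as fast as it fills, the queue is empty by the end of phase $j$; hence $I_{j+1}$ then holds exactly $E_{\mathrm{start}}$ together with all real updates of phase $j$, i.e.\ the current graph $G$, and can seamlessly become active in phase $j+1$. At the start of phase $j$ the now-obsolete instance $I_{j-1}$ is abandoned (its memory reclaimed lazily), so at most two instances are live at once.

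It then remains to check the three bounds. Over its lifetime $I_{j+1}$ receives at most $|E_{\mathrm{start}}| < n^2/2$ initialization insertions, the $n^2$ real updates of phase $j$ (fed in exactly once, through the buffer), and the $n^2$ real updates of phase $j+1$ -- fewer than $4n^2$ in total -- so it always operates within the guaranteed regime of the bounded algorithm, and therefore its output, while it is active, is a valid $\alpha$-spanner of $G$ of size at most $S(m,n,W)$. Each real update causes only $O(1)$ operations distributed over $O(1)$ live instances, each costing $O(T(m,n,W))$ (using that no instance ever stores more than $\binom{n}{2}$ edges), plus $O(1)$ queue bookkeeping, so the worst-case update time is $O(T(m,n,W))$. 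Finally, the sub-sequence of updates handed to any instance is a deterministic function of the adversary's (oblivious) update sequence together with the fixed phase-and-replay schedule, so each instance faces an oblivious adversary on at most $4n^2$ updates; the high-probability correctness of the active instance at any given moment therefore transfers directly, and since we only ever need the current output to be correct we never union-bound over the unboundedly many phases.

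The one genuinely delicate point -- which is exactly what forces the buffering in step (iii) -- is that real updates must not be pushed into $I_{j+1}$ before its bulk initialization from $E_{\mathrm{start}}$ has finished, since otherwise a deletion could target an edge not yet inserted or an insertion could duplicate one still to come; one then has to verify, as above, that the FIFO queue can always be drained before the phase ends. The remaining arguments are routine scheduling bookkeeping.
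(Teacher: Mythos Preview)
Your proposal is correct, but it takes a genuinely different route from the paper's proof.

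The paper does not rebuild an instance from scratch. Instead, it maintains a \emph{partition} $G = G_1 \cup G_2$ of the edge set and runs one instance on each part; the output spanner is the \emph{union} $H_1 \cup H_2$, which is an $\alpha$-spanner of $G$ by decomposability (Lemma~\ref{lem:decomposability}). In each phase of $n^2$ updates one instance is ``growing'' and the other ``shrinking'': every insertion goes to the growing side, every deletion goes to whichever side currently holds the edge, and in addition one arbitrary edge per update is migrated from the shrinking side to the growing side. Hence the shrinking side is empty at the end of the phase and can be restarted, with roles swapped. No buffering or replay is needed, and there is never a moment at which an instance is ``behind'' the real graph.

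Your approach instead keeps a single active instance on the full graph and rebuilds the next one in the background via bulk insertion of $E_{\mathrm{start}}$ followed by FIFO replay of the phase's updates. This avoids any appeal to spanner decomposability and in fact yields an output of size $S(m,n,W)$ rather than $2\,S(m,n,W)$, at the price of the extra bookkeeping you identify (the buffer and the care that deletions are not replayed before the corresponding edge has been inserted). The update-count arithmetic you give ($|E_{\mathrm{start}}| + n^2 + n^2 < 4n^2$) is sound, and your remark that the deterministic phase/replay schedule preserves obliviousness is on point. Both arguments are standard ``global rebuilding'' tricks; the paper's edge-migration variant is a bit slicker here precisely because spanners decompose, while your version would work even for dynamic structures that do not.
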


For completeness, we give the proof of this lemma in an appendix.
We remark that is is entirely identical to the one given in~\cite{AbrahamDKKP16}.

\section{Algorithms for Partial Spanner Computation}

Our goal in this section is to describe fully dynamic algorithm for \emph{partial} spanner computation.  We prove lemmas that can informally be summarized as follows: given a graph $G$ with an orientation $\vec{G}$, one can build a very sparse spanner that only covers the edges leaving nodes with large out-degree in $\vec{G}$.  There is a smooth tradeoff between the sparsity of the spanner and the out-degree threshold beyond which edges are spanned.

As a crucial subroutine, our algorithms employ a fully dynamic algorithm for maintaining certain structural information related to a \emph{clustering} of $G$.  We will describe this subroutine first.

\subsection{Maintaining a clustering structure}\label{sec:maintaining clustering}

In the spanner literature, a \emph{clustering} of a graph $G = (V, E)$ is a partition of the nodes $V$ into \emph{clusters} $C_1, \dots, C_k$, as well as a ``leftover'' set of \emph{free} nodes $F$, with the following properties:
\begin{itemize}
\item For each cluster $C_i$, there exists a ``center'' node $x_i \in V$ such that all nodes in $C_i$ are adjacent to $x_i$.
\item The free nodes $F$ are precisely the nodes that are not adjacent to any cluster center.
\end{itemize}

In this paper, we will represent clusterings with a vector $c$ indexed by $V$, such that for any clustered $v \in V$ we have $c[v]$ equal to its cluster center, and for any free $v \in V$ we use the convention $c[v] = \infty$.

We will use the following subroutine in our main algorithms:
\begin{lemma}\label{lem:maintaining clusters worst case}
Given an oriented graph $ \vec{G} = (V, \vec{E}) $ and a set of cluster centers $ S = \{ s_1, \ldots, s_k \} $, there is a fully dynamic algorithm that simultaneously maintains:
\begin{enumerate}
	\item A clustering $c$ of $ G = (V, E) $ with centers $S$
	\item For each node $ v $ and each cluster index $ i \in \{ 1, \ldots, k \} $, the set
	\begin{equation*}
	\inc (v, i) := \{ u \in \inc (v) \mid c [u] = i \}
	\end{equation*}
	(i.e. the incoming neighbors to $v$ from cluster $i$)
	\item For every pair of cluster indices $ i, j \in \{ 1, \ldots, k \} $, the set
	\begin{equation*}
	\inc (i, j) := \{ (u, v) \in \vec{E} \mid c [u] = j, c [v] = i \}
	\end{equation*}
	(i.e. the incoming neighbors to cluster $i$ from cluster $j$).
\end{enumerate}
This algorithm has worst-case update time $ O (\Delta^+ (\vec{G}) \log{n}) $, where $ \Delta^+ (\vec{G}) $ is the maximum out-degree of $ \vec{G} $.
\end{lemma}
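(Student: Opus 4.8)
The plan is to maintain the clustering vector $c$ together with a small collection of auxiliary structures, organized so that a single edge update of $\vec{G}$ touches only $O(\Delta^+(\vec{G}))$ of their entries. Concretely, I would store, for every node $v$: the out-adjacency list $\out(v)$ (of size at most $\Delta^+(\vec{G})$); a balanced search tree $\mathrm{AdjCent}(v)$ holding the set of cluster centers adjacent to $v$, that is $\bigl(\out(v)\cup\inc(v)\bigr)\cap S$; and a dictionary $D_v$ mapping each cluster index $i$ to the (nonempty) set $\inc(v,i)$. For the set $\inc(i,j)$ I would \emph{not} keep a flat list; instead I keep the cluster partition $C_1,\dots,C_k,F$ as doubly linked lists in which every node carries a pointer to its own dictionary $D_v$, and I regard $\inc(i,j)$ as being represented implicitly by the pair (list of nodes of $C_i$, their dictionaries): it is enumerated by walking $C_i$ and reading off the $j$-entry of each $D_v$. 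The invariant to preserve after each update is that $c$ is a valid clustering of $G$ with centers $S$ (using the standard convention that every center lies in its own cluster and $c[v]=\infty$ for a free non-center $v$), that each node records its assigned center in $c[v]$, and that $\mathrm{AdjCent}(\cdot)$, the $D_v$'s, and the partition lists are consistent with the current $\vec{G}$ and $c$.

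To process an update inserting or deleting a directed edge $(u,v)$, I first patch the ``raw'' structures: add or remove $v$ in $\out(u)$ in constant time, and, whenever $u$ or $v$ lies in $S$ (tested via a static membership flag), add or remove the corresponding center in $\mathrm{AdjCent}(v)$ or $\mathrm{AdjCent}(u)$ in $O(\log n)$ time. The key combinatorial observation is that such an update forces \emph{at most one} of $u,v$ to change its cluster: a node changes cluster only when it is free and acquires an edge to a center, or when it is clustered at $s_i$ and the edge between it and $s_i$ is deleted; since a center is never free, these two events cannot both be triggered by the same edge (the first needs the other endpoint to be a center while this one is free; the second, for both endpoints, would force a center to carry a cluster label different from its own). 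If some node $w\in\{u,v\}$ must be re-clustered, I pick its new label by querying $\mathrm{AdjCent}(w)$ for an arbitrary element (or $\infty$ if it is empty) in $O(\log n)$ time and then invoke a subroutine $\textsc{Reassign}(w)$; finally the $\inc(\cdot,\cdot)$-record of the edge itself (whether $u\in\inc(v,c[u])$) is patched inside $D_v$ in $O(\log n)$ time, a step that is subsumed by $\textsc{Reassign}$ when $w=u$.

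The subroutine $\textsc{Reassign}(w)$, which changes $c[w]$ from $i_{\mathrm{old}}$ to $i_{\mathrm{new}}$, is where the orientation is exploited. The change affects the maintained structures only through the edges incident to $w$, and these split into two groups. The out-edges $(w,x)$ — at most $\Delta^+(\vec{G})$ of them — must each move $w$ from $\inc(x,i_{\mathrm{old}})$ to $\inc(x,i_{\mathrm{new}})$ inside $D_x$; this is the only place the bounded out-degree is used, and it costs $O(\Delta^+(\vec{G})\log n)$. The in-edges $(y,w)$ — of which there may be as many as $n-1$ — are all recorded \emph{inside} $D_w$, and their contribution to $\inc(i_{\mathrm{old}},\cdot)$ versus $\inc(i_{\mathrm{new}},\cdot)$ is corrected simply by relocating $w$, hence the pointer to $D_w$, from the list $C_{i_{\mathrm{old}}}$ to the list $C_{i_{\mathrm{new}}}$ in $O(1)$ time; no incoming edge is ever inspected individually. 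Thus $\textsc{Reassign}(w)$ runs in $O(\Delta^+(\vec{G})\log n)$ worst-case time.

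Correctness then follows by induction on the update sequence, starting from the empty graph (initialization is trivial, or becomes a one-time preprocessing on a given start graph): each step above preserves the stated invariant; the only mildly delicate point is that $c[v]$ need not be the smallest-index adjacent center — we are lazy and re-pick only when the recorded center is lost — but any element of $\mathrm{AdjCent}(v)$ is a legitimate choice, so the invariant holds regardless. Summing costs, an update spends $O(\log n)$ on the raw structures and the edge's own record, plus at most one call to $\textsc{Reassign}$, for a worst-case total of $O(\Delta^+(\vec{G})\log n)$, as claimed. I expect the main obstacle to be precisely the asymmetry between out-degree and in-degree: a re-clustered node can be incident to linearly many edges on its incoming side, so the whole scheme hinges on choosing a representation of $\inc(\cdot,\cdot)$ — the layered one above, in which each node's incoming data lives in a single object $D_v$ that can be moved between clusters as one pointer — under which relocating a node never touches those edges, which is what makes maintaining the set $\inc(i,j)$ within the target time possible at all.
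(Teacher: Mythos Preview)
Your treatment of items 1 and 2 is correct and, if anything, cleaner than the paper's: the paper fixes $c[v]$ to be the minimum-index adjacent center (and so may re-cluster both endpoints of an inserted edge, e.g.\ an edge between two centers), whereas your lazy policy, together with the convention that each center sits in its own cluster, makes the ``at most one endpoint changes'' observation immediate. Both approaches then push the new label of a re-clustered node $w$ along its at most $\Delta^+(\vec G)$ out-edges to repair the sets $\inc(x,\cdot)$, which is the heart of the $O(\Delta^+(\vec G)\log n)$ bound.

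The gap is in item 3. The lemma is invoked by the $5$-spanner construction, which must read off a designated (``first'') element of $\inc(i,j)$ and be told whenever that element changes, all within the same time budget. Under your implicit representation, locating even one element of $\inc(i,j)$ requires scanning all of $C_i$, and when $w$ migrates between clusters you have no hook that reports which pairs $(i,j)$ just gained or lost their representative edge; so your structure is \emph{correct} but does not ``maintain'' $\inc(i,j)$ in the operational sense the caller needs. The paper instead keeps each $\inc(i,j)$ as an explicit list and, on a re-clustering of $w$, updates those lists for the out-edges $(w,x)$ of $w$. You are right to flag the in-edges as the crux: when $w$ is re-clustered, each of its possibly $\Omega(n)$ in-edges $(y,w)$ should migrate from $\inc(i_{\mathrm{old}},c[y])$ to $\inc(i_{\mathrm{new}},c[y])$, and the appendix pseudocode does not touch them either. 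So the obstacle you isolate is genuine; your layered scheme sidesteps it for correctness but at the price of query efficiency, and a fully satisfactory explicit maintenance within the stated bound would need one further idea beyond either your write-up or the paper's.
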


The second $\inc(v, i)$ sets will be useful for the $3$-spanner, while the third $\inc(i, j)$ sets will be useful for the $5$-spanner.

The implementation of this lemma is extremely straightforward; it is not hard to show that the necessary data structures can be maintained in the naive way by simply passing a message along the outgoing edges from $u$ and $v$ whenever an edge $(u, v)$ is inserted or deleted.  Due to space constraints, we defer full implementation details and pseudocode to Appendix~\ref{apx:updating clustering algorithm}.

\subsection{Maintaining a partial $3$-spanner}

We next show how to convert Lemma~\ref{lem:maintaining clusters worst case} into a fully dynamic algorithm for maintaining a partial $3$-spanner of a graph, as described in the introduction.  Specifically:

\begin{lemma}\label{lem:3 spanner worst-case fine-grained}
For every integer $ 1 \leq \outputdegree \leq n $, there is a fully dynamic algorithm that takes an oriented graph $\vec{G} = (V, \vec{E})$ on input and maintains subgraphs $A = (V, E_A), \vec{B} = (V, \vec{E}_B)$ (i.e. $\vec{B}$ is oriented but $A$ is not) over a sequence of $4n^2$ updates with the following properties:
\begin{itemize}
\item $ \dist_A (u, v) \leq 3 $ for every edge $ \{u, v\} $ in $E \setminus E_B$
\item $ A $ has size $ | A | = O (n^2 (\log{n}) / \outputdegree + n) $
\item The maximum out-degree of $ \vec{B} $ is $ \Delta^+ (\vec{B}) \leq \outputdegree $.
\item With every update in $ G $, at most $ 4 $ edges are changed in $ \vec{B} $.
\end{itemize}
Further, this algorithm has worst-case update time $  O (\Delta^+ (\vec{G}) \log{n}) $.  The algorithm is randomized, and all of the above properties hold with high probability against an oblivious adversary.
\end{lemma}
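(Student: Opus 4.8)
The plan is to run a dynamic version of the Baswana--Sen clustering construction: fix a random set of cluster centers, maintain the induced clustering and its incidence data with Lemma~\ref{lem:maintaining clusters worst case}, let $A$ be the union of the ``spanning'' edges and one representative edge per (vertex, cluster) pair, and let $\vec B$ absorb the few edges that this clustering does not take care of.

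I would first fix the center set $S$ at initialization. Applying the Hitting Sets lemma (Lemma~\ref{lem:random hitting set}) with $q=\outputdegree$ to the collection of all neighborhoods $N(v)$ of size at least $\outputdegree$ that occur in any of the at most $4n^2+1$ graphs the update sequence passes through --- legitimate because the adversary is oblivious and so fixes the sequence independently of $S$, and this is a collection of $O(n^3)$ sets --- yields, with high probability, a set $S$ of size $O(n\log n/\outputdegree)$ hitting all of them. In particular, at every time step every vertex of degree at least $\outputdegree$ is adjacent to a center; equivalently, every \emph{free} vertex (one adjacent to no center) always has degree less than $\outputdegree$. I then run Lemma~\ref{lem:maintaining clusters worst case} on $\vec G$ with center set $S$, maintaining the clustering $c$ --- \emph{lazily}, so that at most one vertex changes cluster per update --- together with the sets $\inc(v,i)$, in worst-case time $O(\Delta^+(\vec G)\log n)$.

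For $A$, I take the edge $\{v,c[v]\}$ for each clustered $v$ together with, for every vertex $v$ (clustered \emph{or} free) and every cluster index $i$ with $\inc(v,i)\neq\emptyset$, one edge joining $v$ to a vertex of cluster $i$; these are read off the maintained data, each change to an $\inc$-set causing $O(1)$ edits to $A$. Since $|S|=O(n\log n/\outputdegree)$ this gives $|A|=O(n+n|S|)=O(n^2\log n/\outputdegree+n)$. For the stretch bound, suppose $\{v,x\}\in E$ with $x$ clustered. Then $v$ is adjacent to cluster $c[x]$ (via $x$ itself), so $A$ contains a representative edge $\{v,u\}$ with $u$ in cluster $c[x]$, and $v\to u\to c[x]\to x$ --- using $\{v,u\}$ and the spanning edges $\{u,c[x]\}$ and $\{x,c[x]\}$ --- is a path of length at most $3$ in $A$; crucially this does not use that $v$ itself is clustered. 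Hence $A$ $3$-spans every edge of $G$ with at least one clustered endpoint, so it suffices to let $\vec B$ (essentially) be the set of edges with both endpoints free, oriented out of, say, the lower-indexed endpoint: since free vertices have degree $<\outputdegree$, this gives $\Delta^+(\vec B)<\outputdegree$ (clustered vertices have out-degree $0$ in $\vec B$), and the correctness of $A$ yields $\dist_A(u,v)\le 3$ for all $\{u,v\}\in E\setminus E_B$.

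The one genuinely delicate requirement --- the step I expect to be the main obstacle --- is that $\vec B$ change by at most a constant number of edges per update. The updated edge accounts for one change, and only the single vertex $v$ whose cluster status flips can cause further changes; the danger is a \emph{cascade}. When $v$ turns from clustered to free, every edge $\{v,w\}$ whose other endpoint $w$ is also free and that had previously been inserted while both endpoints were clustered (so it was never placed into $\vec B$ and is no longer $3$-spanned by $A$) must now enter $\vec B$, and there can be $\Theta(\outputdegree)$ of these; the reverse flip raises symmetric issues. Resolving this requires choosing the invariant on $\vec B$ so that neither flip cascades --- for instance, proactively keeping in $\vec B$ (oriented away from them) all edges incident to any vertex that might soon become free, while carefully arguing that the resulting surplus edges never push an out-degree above $\outputdegree$ (plausible because a vertex has degree below $\outputdegree$ precisely when it is free) and never invalidate the stretch guarantee --- and then checking that this stays consistent with the $O(\Delta^+(\vec G)\log n)$ time budget. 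This is where essentially all of the real work lies; the sampling, the clustering, and the stretch argument above are routine given Lemmas~\ref{lem:maintaining clusters worst case} and~\ref{lem:random hitting set}.
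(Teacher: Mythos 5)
Your overall plan---random centers, dynamic clustering via Lemma~\ref{lem:maintaining clusters worst case}, stretch from the Baswana--Sen argument, ``leftover'' edges in $\vec{B}$---is on the right track, and you correctly identify the crux: bounding the number of changes to $\vec{B}$ per update. But the specific choice you make for $\vec{B}$ (edges with both endpoints free, or some padded version thereof) is the wrong one, and the gap you flag is real and is not closed by the sketch you offer. When a vertex $v$ flips between free and clustered---which happens whenever the \emph{single} edge between $v$ and its last adjacent center is deleted or re-inserted---up to $\Theta(\outputdegree)$ edges incident to $v$ must simultaneously enter or leave $\vec{B}$, and the ``proactively keep edges near might-soon-be-free vertices'' fix is a heuristic without an invariant: you would need to characterize exactly which vertices are ``close to free,'' keep that set stable under single updates, and still bound out-degree and preserve the stretch. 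None of that is obviously achievable, and it is not how the paper proceeds.

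The paper's definition of $\vec{B}$ is decoupled from the clustering altogether and is purely combinatorial: fix an arbitrary total order on $V$, and for every node $u$ put into $\vec{B}$ the directed edges $(u,v)$ to the first $\min(\outputdegree,|N(u)|)$ neighbors of $u$ under that order, maintained with two balanced BSTs $N_{\le \outputdegree}(u)$ and $N_{>\outputdegree}(u)$. Then each insertion or deletion of $\{u,v\}$ changes $N_{\le \outputdegree}(u)$ and $N_{\le \outputdegree}(v)$ by at most one element each, giving the ``at most $4$ changes to $\vec{B}$'' bound for free, and $\Delta^+(\vec{B})\le\outputdegree$ is immediate. The probabilistic work is then shifted entirely to the stretch guarantee: if $\{u,v\}\notin B$, then $v$ is not among the first $\outputdegree$ neighbors of $u$ and vice versa, so $|N(u)|,|N(v)|>\outputdegree$, and the hitting-set lemma (applied to the first-$\outputdegree$-neighbor sets across all $O(n^3)$ graph versions) guarantees both endpoints are clustered w.h.p.; from there the $3$-path $u\to s_{c[u]}\to u'\to v$ through the stored representative $u'\in\inc(v,c[u])$ works exactly as in your stretch paragraph. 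So the part you called ``routine'' is essentially right, but the part you called ``where all the real work lies'' is resolved by choosing a different, clustering-independent $\vec{B}$---not by taming the cascade you describe. As written, your proof is incomplete at precisely the step the lemma is designed to make nontrivial.
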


Informally, this lemma states the following.  Edges leaving nodes with high out-degree are easy for us to span; we maintain $A$ as a sparse spanner of these edges.  Edges leaving nodes with low out-degree are harder for us to span, and we maintain $\vec{B}$ as a collection of these edges.  

Note that this lemma is considerably \emph{stronger} than the existence of a $3$-spanner.
In particular, by setting $ \outputdegree = \sqrt{n \log{n}} $ and then using $A \cup \vec{B}$ as a spanner of $G$, we obtain a fully dynamic algorithm for maintaining a $3$-spanner:
\begin{corollary}\label{cor:3 spanner worst-case}
There is a fully dynamic algorithm for maintaining a $3$-spanner of size $ O (n^{1 + 1/2} \sqrt{\log{n}}) $ for an oriented graph $ \vec{G} $ with worst-case update time $ O (\Delta^+ (\vec{G}) \log{n}) $. 
The stretch and the size guarantee both hold with high probability against an oblivious adversary.
\end{corollary}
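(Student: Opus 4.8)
The plan is to instantiate Lemma~\ref{lem:3 spanner worst-case fine-grained} with the out-degree threshold $\outputdegree := \lceil \sqrt{n \log n} \rceil$ and to maintain $H := A \cup B$ as the spanner, where $A$ and $\vec{B}$ are the subgraphs produced by that lemma and $B$ denotes the undirected projection of $\vec{B}$. Feeding the input orientation $\vec{G}$ directly into the algorithm of Lemma~\ref{lem:3 spanner worst-case fine-grained} yields worst-case update time $O(\Delta^+(\vec{G}) \log n)$; since the running time per update is $O(\Delta^+(\vec{G})\log n)$, at most that many edges of $A$ can change per update (and, by the last bullet of the lemma, at most $4$ edges of $\vec{B}$), so the union $H$ can be maintained incrementally within the same asymptotic bound.

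For the stretch I would invoke the Spanner Adjacency Lemma (Lemma~\ref{lem:span adjacency}): it suffices to show $\dist_H(u,v) \le 3$ for every edge $\{u,v\} \in E$. If $\{u,v\} \in E_B$ then $\{u,v\} \in H$, so $\dist_H(u,v) = 1$; otherwise $\{u,v\} \in E \setminus E_B$, and Lemma~\ref{lem:3 spanner worst-case fine-grained} gives $\dist_A(u,v) \le 3$, hence $\dist_H(u,v) \le 3$ since $A \subseteq H$. For the size, the same lemma bounds $|A| = O(n^2 (\log n)/\outputdegree + n) = O(n^{1+1/2}\sqrt{\log n})$ for our choice of $\outputdegree$, while the guarantee $\Delta^+(\vec{B}) \le \outputdegree$ forces $|B| \le n \cdot \outputdegree = O(n^{1+1/2}\sqrt{\log n})$; therefore $|H| \le |A| + |B| = O(n^{1+1/2}\sqrt{\log n})$.

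Finally, Lemma~\ref{lem:3 spanner worst-case fine-grained} only supports a sequence of $4n^2$ updates, so I would close the argument by applying the Update Extension reduction (Lemma~\ref{lem:extending spanner to long sequence}), which lifts this to an arbitrary number of updates at the cost of only a constant factor in size and update time. All guarantees of Lemma~\ref{lem:3 spanner worst-case fine-grained} hold with high probability against an oblivious adversary, and both the union construction and the extension reduction preserve this, so the final algorithm inherits the same correctness guarantee. There is no substantial obstacle here: the only points requiring a little care are the size accounting for $B$ (which uses the out-degree bound rather than an explicit edge count) and the routine bookkeeping in the Update Extension step.
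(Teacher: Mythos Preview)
Your proposal is correct and matches the paper's approach exactly: the paper likewise sets $\outputdegree = \sqrt{n\log n}$, takes $A \cup \vec{B}$ as the spanner, and invokes Lemma~\ref{lem:extending spanner to long sequence} to lift the $4n^2$-update restriction. You have simply filled in the size and stretch verifications that the paper omits as ``essentially immediate.''
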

The proof is essentially immediate from Lemma~\ref{lem:3 spanner worst-case fine-grained}; we omit it because it is non-essential.  The detail of handling only $4n^2$ updates is not necessary in this corollary, due to Lemma~\ref{lem:extending spanner to long sequence}.

Looking forward, we will wait until Lemma~\ref{lem:bootstrapping worst case} to show precisely how the extra generality in Lemma~\ref{lem:3 spanner worst-case fine-grained} is useful towards strong worst-case update time.
The rest of this subsection is devoted to the proof of Lemma~\ref{lem:3 spanner worst-case fine-grained}.

\subsubsection{Algorithm}

It will be useful in this algorithm to fix an arbitrary ordering of the nodes in the graph.  This allows us to discuss the ``smallest'' or ``largest'' node in a list, etc.

We initialize the algorithm by determining a set of cluster centers $ S $ via random sampling.
Specifically, every node of $ G $ is added to $ S $ independently with probability $ p = \min (x / \outputdegree, 1) $ where $ x = a \ln{(4 n^5)} + 1 $ for some error parameter $ a \geq 1 $.
We then use the algorithm of Lemma~\ref{lem:maintaining clusters worst case} above to maintain a clustering with $ S = \{s_1, \dots, s_k \} $ as the set of cluster centers.
The subgraphs $ A $ and $ \vec{B} $ are defined according to the following three rules:

\begin{enumerate}
	\item For every clustered node $ v $ (i.e. $ c [v] \neq \infty $), $ A $ contains the edge $ \{ v, c[v] \} $ from $v$ to its cluster center in $S$.\label{edges to cluster centers}
	\item For every clustered node $ v $ (i.e. $ c [v] \neq \infty $) and every cluster index $ 1 \leq i \leq k $, $ A $ contains the edge $ \{ u, v \} $ to the first node $ u \in \inc (v, i) $ (unless $ \inc (v, i) = \emptyset $).\label{edges to clusters}
	
	\item For every node $ u $ and every node $ v $ among the \emph{first} $ \outputdegree $ neighbors of $ u $ in $ N (u) $ (with respect to an arbitrary fixed ordering of the nodes), $\vec{B}$ contains the edge $ (u, v) $.  Alternately, if $ | N(u) | \leq \outputdegree $, then $\vec{B}$ contains all such edges $ (u, v) $.\label{edges to first outgoing neighbors}
\end{enumerate}

We maintain the subgraph $ \vec{B} $ in the following straightforward way.
For every node $ u $ we store $ N (u) $, the set of neighbors of $ u $, in two self-balancing binary search trees: $ N_{\leq \outputdegree} (u) $ for the first $ \outputdegree $ neighbors and $ N_{> \outputdegree} (u) $ for the remaining neighbors.
Every time an edge $ (u, v) $ or an edge $ (v, u) $ is inserted into $ \vec{G} $, we add $ v $ to $ N_{\leq \outputdegree} (u) $ and we add $ (u, v) $ to $ \vec{B} $.
If $ N_{\leq \outputdegree} (u) $ now contains more than $ \outputdegree $ nodes, we remove the largest element $ v' $, add it to $ N_{> \outputdegree} (u) $, and remove $ (u, v') $ from $ \vec{B} $.\footnote{Note that the node $ v' $ that is removed from $ N_{\leq \outputdegree} (u) $ might be the node $ v $ we have added in the first place.}
Similarly, every time an edge $ (u, v) $ or an edge $ (v, u) $ is deleted from $ \vec{G} $, we first check if $ v $ is contained in $ N_{> \outputdegree} (u) $ and if so remove it from $ N_{> \outputdegree} (u) $.
Otherwise, we first remove $ v $ from $ N_{\leq \outputdegree} (u) $ and $ (u, v) $ from $ \vec{B} $.
Then we find the smallest node $ v' $ in $ N_{> \outputdegree} (u) $, remove $v'$ from $ N_{> \outputdegree} (u) $, add $ v' $ to $ N_{\leq \outputdegree} (u) $, and add $ (u, v) $ to $ \vec{B} $.

We now explain how to maintain the subgraph $ A $.  As an underlying subroutine, we use the algorithm of Lemma~\ref{lem:maintaining clusters worst case} to maintain a clustering w.r.t. centers $ S $.
On each edge insertion/deletion, we first update the clustering, and then perform the following steps:
\begin{enumerate}
	\item For every node $v$ for which $ c [v] $ has just changed from some center $ s_i $ to some other center $ s_j $, we remove the edge $ \{v, s_i\} $ from $ A $ (if $ i \neq \infty $) and add the edge $ \{v, s_j\} $ to $ A $ (if $ j \neq \infty $).\label{node changes cluster}
	\item For every node $ u $ that has been added to $ \inc (v, i) $ for some node $ v $ and some $ 1 \leq i \leq k $, we check if $ u $ is now the first node in $ \inc (v, i) $.  If so, we add the edge $ \{u, v\} $ to $ A $ and remove the edge $ \{u', v\} $ for the previous first node $ u' $ of $ \inc (v, i) $ (if $ \inc (v, i) $ was previously non-empty).\label{node added to list}
	\item For every node $ u $ that is removed from $ \inc (v, i) $ for some node $ v $ and some $ 1 \leq i \leq k $, we check if $ u $ was the first node in $ \inc (v, i) $ and if so remove the edge $ \{u, v\} $  from $ A $ and add the edge $ \{u', v\} $ for the new first node $ u' $ of $ \inc (v, i) $ (if $ \inc (v, i) $ is still non-empty).\label{node removed from list}
\end{enumerate}

\subsubsection{Analysis}

To bound the update time required by this algorithm, we will argue that we spend $ O (\Delta^+ (\vec{G}) \log{n}) $ time per update maintaining $A$, and $O(\log n)$ time per update maintaining $\vec{B}$ (which, in our applications, is always dominated by $ O (\Delta^+ (\vec{G}) \log{n}) $).
By Lemma~\ref{lem:maintaining clusters worst case}, the clustering structure can be updated in time $ O (\Delta^+ (\vec{G}) \log{n}) $.
Each operation in steps~\ref{node changes cluster},~\ref{node added to list}, and~\ref{node removed from list} above can be charged to the corresponding changes in $ s_i $ and $ \inc (v, i) $ and thus can also be carried out within the same $ O (\Delta^+ (\vec{G}) \log{n}) $ time bound.
Updating the subgraph $ \vec{B} $ takes time $ O (\log{n}) $, since we must perform a constant number of queries and updates in the corresponding self-balancing binary search trees.

We now show that the subgraphs $ A $ and $ \vec{B} $ have all of the properties claimed in Lemma~\ref{lem:3 spanner worst-case fine-grained}.
First, we will discuss the sparsity bounds on $A$ and $\vec{B}$.
Observe that rule~\ref{edges to cluster centers} contributes at most $n$ edges to $A$, since every node is contained in at most one cluster.
Next, recall that the number of cluster centers $ S $ is $|S| = k = O (n (\log{n}) / d) $ (by Lemma~\ref{lem:random hitting set}, with high probability).
Thus, $ A $ contains only $ O(n k) = O (n^2 (\log{n}) / d) $ edges due to rule~\ref{edges to clusters}.
As the only edges of $ \vec{B} $ come from rule~\ref{edges to first outgoing neighbors}, the maximum out-degree in $ \vec{B} $ is $ \outputdegree $.
The claimed sparsity bounds therefore hold.
Furthermore, with every insertion or deletion of an edge $ \{u, v\} $ in $ G $, at most one edge is added to or removed from the first $ \outputdegree $ neighbors of $ u $ and $ v $, respectively.
This implies that there are at most $4$ changes to $ \vec{B} $ with every update in $ G $.
It now only remains to show that $ A $ is a $3$-spanner of $ G \setminus B $.

\begin{lemma}
For up to $ 4 n^3 $ updates, $ \dist_A (u, v) \leq 3 $ for every edge $ \{u, v\} $ in $E \setminus E_B$ with high probability.
\end{lemma}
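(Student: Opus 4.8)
The plan is to isolate the single randomized event we depend on --- that the sampled center set $S$ hits every large neighborhood that ever appears --- and then derive the distance bound deterministically from the rules defining $A$ and $\vec{B}$. The starting observation is what it means for an edge $\{u,v\}$ to lie in $E \setminus E_B$: by rule~\ref{edges to first outgoing neighbors}, $\vec{B}$ contains $(u,v)$ whenever $v$ is among the first $\outputdegree$ neighbors of $u$, and it contains every edge out of $u$ when $|N(u)| \le \outputdegree$. Hence $\{u,v\} \notin E_B$ forces both $|N(u)| > \outputdegree$ and $|N(v)| > \outputdegree$; in particular $N(u)$ and $N(v)$ have size at least $\outputdegree$, so Lemma~\ref{lem:random hitting set} applies to them.

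For the probabilistic step, recall that against an oblivious adversary the whole update sequence, and hence each of the intermediate graphs $G^{(0)}, \dots, G^{(T)}$ with $T \le 4n^3$, is fixed independently of $S$. For every time step $t \ge 1$ and every node $w$ with $|N_{G^{(t)}}(w)| \ge \outputdegree$, take the neighborhood $N_{G^{(t)}}(w)$ as one of the sets in Lemma~\ref{lem:random hitting set}; there are at most $Tn \le 4n^4$ of them, each of size at least $q := \outputdegree$, and $S$ is sampled precisely with $p = \min(x/q, 1)$ for $x = a\ln(4n^5) + 1 \ge a\ln(rn) + 1$ (with $r \le 4n^4$, padding by repetition if one wants $r$ exactly). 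The lemma then gives that, with probability at least $1 - 1/n^a$, every such neighborhood meets $S$. Condition on this event henceforth: at each time step, every node of current degree exceeding $\outputdegree$ is clustered.

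Now fix any time step and any edge $\{u,v\} \in E \setminus E_B$; by the first paragraph both $u$ and $v$ have degree exceeding $\outputdegree$, hence both are clustered. Assume $\vec{G}$ orients the edge as $(u,v)$, so $u \in \inc(v)$ (the case $(v,u)$ is symmetric, swapping $u$ and $v$). Let $s_j := c[u]$ be the center of $u$'s cluster; then $u \in \inc(v, j)$, so $\inc(v, j) \ne \emptyset$, and since $v$ is clustered, rule~\ref{edges to clusters} puts into $A$ the edge $\{u', v\}$, where $u'$ is the first node of $\inc(v, j)$. As $u'$ and $u$ lie in the same cluster $j$ we have $c[u'] = c[u] = s_j$, so rule~\ref{edges to cluster centers} puts both $\{u, s_j\}$ and $\{u', s_j\}$ into $A$. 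Therefore the walk $u \to s_j \to u' \to v$ lies entirely in $A$ and has length at most $3$ (collapsing to a shorter path in the degenerate cases $u' = u$, $u' = s_j$, or $u = s_j$), which is exactly $\dist_A(u,v) \le 3$.

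I expect the only real obstacle to be the bookkeeping in the probabilistic step: one must keep the number of sets handed to Lemma~\ref{lem:random hitting set} within the $rn = 4n^5$ budget implicit in the choice of $x$ --- which is precisely why the claim is restricted to $4n^3$ updates --- and one must use obliviousness so that conditioning on $S$ does not distort the update sequence. Everything afterwards is deterministic, and relies only on the fact that the clustering and the incoming-neighbor lists $\inc(v, i)$ are maintained correctly (Lemma~\ref{lem:maintaining clusters worst case} and the update rules), so that $A$ indeed contains all edges prescribed by rules~\ref{edges to cluster centers} and~\ref{edges to clusters}; then the three-edge path above always exists.
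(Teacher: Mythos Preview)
Your proof is correct and follows essentially the same approach as the paper: invoke the hitting-set lemma (Lemma~\ref{lem:random hitting set}) over all versions of the graph to guarantee the tail $u$ is clustered, then exhibit the three-edge walk $u \to s_j \to u' \to v$ via rules~\ref{edges to cluster centers} and~\ref{edges to clusters}. You are in fact slightly more careful than the paper in one place: you also argue $|N(v)| > \outputdegree$ and hence that $v$ is clustered, which rule~\ref{edges to clusters} as written literally requires; the paper's own proof uses rule~\ref{edges to clusters} for $v$ without pausing to verify this.
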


\begin{proof}
Let $ \{ u, v \} $ be an edge of $E \setminus E_B$.
Assume  without loss of generality that the edge is oriented from $ u $ to $ v $ in $ \vec{G} $.
As $ \{ u, v \} $ is not contained in $ B $, by rule~\ref{edges to first outgoing neighbors} above we have $ | N (u) | > \outputdegree $.
Thus, by Lemma~\ref{lem:random hitting set}, since the cluster centers $S$ were chosen by random sampling, with high probability there exists a cluster center in the first $ \outputdegree $ outgoing neighbors of each node in all of up to $ 4 n^3 $ different versions of $G$ (i.e. one version for each of the $4n^3$ updates considered).
Therefore $ c[u] = i $ for some $ 1 \leq i \leq k $ and, by rule~\ref{edges to cluster centers}, $ A $ contains the edge $ \{ u, s_i \} $.
Since $ c[u] = i $, and $ u $ is an incoming neighbor of $ v $ in $ \vec{G} $, we have $ \inc (v, i) \neq \emptyset $, and thus, for the first element $ u' $ of $ \inc (v, i) $, $ A $ contains the edge $ \{ u', v \} $ (by rule~\ref{edges to clusters}).
As $ c [u'] = i $, $ A $ contains the edge $ \{ s_i, u' \} $ by rule~\ref{edges to cluster centers}.
This means that $ A $ contains the edges $ \{ u, s_i \} $, $ \{ s_i, u' \} $, and $ \{ u', v \} $, and thus there is a path from $ u $ to $ v $ of length~$ 3 $ in $ A $ as desired.
\end{proof}
This now also completes the proof of Lemma~\ref{lem:3 spanner worst-case fine-grained}.

\subsection{$5$-spanner}
 
The $5$-spanner algorithm is very similar to the $3$-spanner algorithm above, but we define the edges of the spanner in a slightly different way.
Instead of including an edge from each node to each cluster, we have an edge between each \emph{pair} of clusters.
Thus, the subgraphs $ A $ and $ \vec{B} $ are defined according to the following three rules:
\begin{enumerate}
	\item For every clustered node $ v $ (i.e. $ c [v] \neq \infty $), $ A $ contains the edge $ \{ v, c[v] \} $ from $v$ to its cluster center in $S$.\label{edges to cluster centers 5-spanner}
	\item For every pair of distinct cluster indices $ 1 \leq i, j \leq k $, $ A $ contains the edge $ \{ u, v \} $, where $ \{ u, v \} $ is the first element in $ \inc (i, j) $ (unless $ \inc (i, j) = \emptyset $).\label{edges between clusters 5-spanner}
	\item For every node $ u $ and every node $ v $ among the \emph{first} $ \outputdegree $ neighbors of $ u $ in $ N (u) $ (with respect to an arbitrary fixed ordering of the nodes), $\vec{B}$ contains the edge $ (u, v) $.  Alternately, if $ | N(u) | \leq \outputdegree $, then $\vec{B}$ contains all such edges $ (u, v) $..\label{edges to first outgoing neighbors 5-spanner}
\end{enumerate}

Beyond this slightly altered definition, we use the same approach for maintaining $ A $ and $ \vec{B} $ as in the $3$-spanner.
The guarantee on the stretch can be proved as follows.
\begin{lemma}
For up to $ 4 n^3 $ updates, $ \dist_A (u, v) \leq 5 $ for every edge $ \{u, v\} $ in $E \setminus E_B$ with high probability.
\end{lemma}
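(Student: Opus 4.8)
The plan is to mirror the proof of the analogous $3$-spanner statement, adapting it to the pair-of-clusters rule. Fix an edge $\{u, v\} \in E \setminus E_B$ and assume without loss of generality that it is oriented from $u$ to $v$ in $\vec{G}$. The first step is to observe that, because $\{u, v\} \notin E_B$, rule~\ref{edges to first outgoing neighbors 5-spanner} was not triggered from \emph{either} endpoint: $v$ is not among the first $\outputdegree$ neighbors of $u$ in $N(u)$, and $u$ is not among the first $\outputdegree$ neighbors of $v$ in $N(v)$. In particular $|N(u)| > \outputdegree$ and $|N(v)| > \outputdegree$, so both endpoints have high degree. This is the one substantive difference from the $3$-spanner argument, where it sufficed that the tail be clustered; here rule~\ref{edges between clusters 5-spanner} is phrased in terms of pairs of clusters, so the argument will need \emph{both} $u$ and $v$ to lie in clusters.

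Next I would invoke the hitting-set lemma (Lemma~\ref{lem:random hitting set}). The cluster centers $S$ were sampled independently with probability $p = \min(x/\outputdegree, 1)$ for $x = a \ln(4 n^5) + 1$; applying the lemma to the family consisting of, for each of the (at most $4 n^3$) versions of $G$ and each node with at least $\outputdegree$ neighbors in that version, the set of its first $\outputdegree$ neighbors (a family of at most $4 n^4$ sets, each of size at least $\outputdegree$, so that $\ln(rn) \le \ln(4 n^5)$), we obtain that with high probability every such node has a cluster center among its first $\outputdegree$ neighbors, simultaneously across all $4 n^3$ versions. Combined with the previous paragraph this shows that, whp, both $u$ and $v$ are clustered; write $c[u] = i$ and $c[v] = j$.

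Finally I would exhibit a walk of length at most $5$ from $u$ to $v$ in $A$. By rule~\ref{edges to cluster centers 5-spanner}, $A$ contains $\{u, s_i\}$ and $\{v, s_j\}$. If $i = j$ we are already done, since then $u - s_i - v$ is a path of length $2$ in $A$. Otherwise $i \neq j$, and the edge $(u, v)$ witnesses that $\inc(j, i) \neq \emptyset$ (recall $\inc(j,i) = \{ (x,y) \in \vec{E} \mid c[x] = i,\ c[y] = j \}$); let $(u', v')$ be its first element, so that $c[u'] = i$ and $c[v'] = j$. Then rule~\ref{edges between clusters 5-spanner} places $\{u', v'\}$ into $A$, and rule~\ref{edges to cluster centers 5-spanner} places $\{u', s_i\}$ and $\{v', s_j\}$ into $A$. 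Hence $A$ contains the walk $u - s_i - u' - v' - s_j - v$, giving $\dist_A(u, v) \le 5$ as claimed. The only place where anything can go wrong is the hitting-set step — guaranteeing that \emph{both} endpoints end up clustered in every one of the polynomially many intermediate graphs — and this is handled exactly as in the $3$-spanner proof, by folding all versions into a single application of Lemma~\ref{lem:random hitting set} with the stated choice of $x$.
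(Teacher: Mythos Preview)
Your proof is correct and follows essentially the same approach as the paper's own proof: use the fact that $\{u,v\}\notin E_B$ forces both endpoints to have degree exceeding $\outputdegree$, invoke the hitting-set lemma so that both are clustered, and then exhibit the length-$5$ walk $u - s_{c[u]} - u' - v' - s_{c[v]} - v$ through the inter-cluster edge provided by rule~\ref{edges between clusters 5-spanner}. Your version is actually a touch more careful than the paper's (you explicitly note that rule~\ref{edges to first outgoing neighbors 5-spanner} must fail from \emph{both} endpoints, and you separate out the degenerate case $c[u]=c[v]$), but the argument is the same.
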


\begin{proof}
Let $ \{ u, v \} $ be an edge of $E \setminus E_B$.
Assume  without loss of generality that the edge is oriented from $ u $ to $ v $ in $ \vec{G} $.
As $ \{ u, v \} $ is not contained in $ B $, by rule~\ref{edges to first outgoing neighbors 5-spanner} above we have $ | N (v) | > \outputdegree $.
We now apply Lemma~\ref{lem:random hitting set} to argue that there is a cluster center in the first $ \outputdegree $ outgoing neighbors of each node in up to $ 4 n^3 $ versions of the graph (one version for each update to be considered).
Thus, $ N (v) $ contains a cluster center from $ S $ with high probability.
Therefore $ c[v] = i $ for some $ 1 \leq i \leq k $ and, by rule~\ref{edges to cluster centers}, $ A $ contains the edge $ \{ v, s_i \} $.
By the same argument, $ N (u) $ contains a cluster center from $ S $ with high probability and thus $ A $ contains an edge $ \{ u, s_j \} $ where $ c[u] = j $ for some $ 1 \leq j \leq k $.
Since $ c[v] = i $, $ c[u] = j $, and $ u $ is an incoming neighbor of $ v $ in $ \vec{G} $, we have $ \inc (i, j) \neq \emptyset $, and thus, for the first element $ (u', v') $ of $ \inc (i, j) $, $ A $ contains the edge $ \{ u', v' \} $ (by rule~\ref{edges between clusters 5-spanner}).
As $ c[v'] = i $, $ c[u'] = j $, $ A $ contains the edges $ \{ v', s_i \} $ and $ \{ u', s_j \} $ by rule~\ref{edges to cluster centers 5-spanner}.
This means that $ A $ contains the edges $ \{ u, s_i \} $, $ \{ s_i, u' \} $, $ \{ u', v' \} $, $ \{ v', s_i \} $ and $ \{ s_i, v \} $, and thus there is a path from $ u $ to $ v $ of length~$ 5 $ in $ A $ as desired.
\end{proof}
Note that in this proof we exploit the fact that we have cluster centers for both $ u $ and $ v $ whenever the edge $ \{u, v\} $ is missing.
This motivates our design choice for considering the whole neighborhood of a node to determine its cluster.
If we only considered cluster centers in the outgoing neighbors of a node, the resulting clustering would still be good enough for the $3$-spanner, but the argument above for the $5$-spanner would break down.

All other properties of the $5$-spanner can be proved in an essentially identical manner to the $3$-spanner.
We can summarize the obtained guarantees as follows.
\begin{lemma}\label{lem:5 spanner worst-case fine-grained}
For every integer $ 1 \leq \outputdegree \leq n $, there is a fully dynamic algorithm that takes an oriented graph $\vec{G} = (V, \vec{E})$ on input and maintains subgraphs $A = (V, E_A), \vec{B} = (V, \vec{E}_B)$ (i.e. $\vec{B}$ is oriented but $A$ is not) over a sequence of $4n^2$ updates with the following properties:
\begin{itemize}
\item $ \dist_A (u, v) \leq 5 $ for every edge $ \{u, v\} $ in $E \setminus E_B$
\item $ A $ has size $ | A | = O ((n^2 \log^2{n}) / \outputdegree^2 + n) $
\item The maximum out-degree of $ \vec{B} $ is $ \Delta^+ (\vec{B}) \leq \outputdegree $.
\item With every update in $ G $, at most $ 4 $ edges are changed in $ \vec{B} $.
\end{itemize}
Further, this algorithm has worst-case update time $ O (\Delta^+ (\vec{G}) \log{n}) $.  The algorithm is randomized, and all of the above properties hold with high probability against an oblivious adversary.
\end{lemma}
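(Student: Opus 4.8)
The plan is to recycle, almost verbatim, the analysis of the $3$-spanner (Lemma~\ref{lem:3 spanner worst-case fine-grained}), with rule~\ref{edges to clusters} replaced by rule~\ref{edges between clusters 5-spanner}. The stretch guarantee $\dist_A(u,v) \le 5$ for every edge $\{u,v\} \in E \setminus E_B$ throughout the update sequence is exactly the bound established just above (which in fact tolerates $4n^3 \gg 4n^2$ updates), so nothing remains to be done there. It thus suffices to verify the size bound on $A$, the out-degree bound on $\vec B$, the ``at most $4$ changes to $\vec B$ per update'' bound, the update time, and the probabilistic guarantees.

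For the size of $A$: rule~\ref{edges to cluster centers 5-spanner} adds at most one edge per clustered node, hence at most $n$ edges, while rule~\ref{edges between clusters 5-spanner} adds at most one edge per ordered pair of distinct cluster indices, hence $k(k-1) = O(k^2)$ edges, where $k = |S|$. Applying Lemma~\ref{lem:random hitting set} with the sampling probability $p = \min(x/\outputdegree, 1)$ and $x = \Theta(a \log n)$ fixed at initialization yields $k = O(n \log n / \outputdegree)$ with high probability, so $|A| = O(n^2 \log^2 n / \outputdegree^2 + n)$, as claimed. Since only rule~\ref{edges to first outgoing neighbors 5-spanner} — which is literally identical to rule~\ref{edges to first outgoing neighbors} of the $3$-spanner — contributes edges to $\vec B$, and it keeps only the first $\outputdegree$ outgoing neighbors of each node, we get $\Delta^+(\vec B) \le \outputdegree$; and a single edge update $\{u,v\}$ changes the first-$\outputdegree$-neighbor list of each of $u$ and $v$ by at most one insertion and one eviction, giving at most $4$ edge changes to $\vec B$ per update in $G$, exactly as before.

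For the update time, we again run the clustering subroutine of Lemma~\ref{lem:maintaining clusters worst case} in $O(\Delta^+(\vec G)\log n)$ worst-case time per update; crucially, that subroutine maintains precisely the sets $\inc(i,j)$ to which rule~\ref{edges between clusters 5-spanner} refers (its third item), just as it supplied the sets $\inc(v,i)$ used by the $3$-spanner (its second item). To keep $A$ consistent with rules~\ref{edges to cluster centers 5-spanner}--\ref{edges between clusters 5-spanner}, we store the elements of each nonempty $\inc(i,j)$ in a self-balancing binary search tree keyed by the fixed node ordering, so that its ``first'' element, and hence the edge of $A$ associated with $(i,j)$, can be refreshed in $O(\log n)$ time after any insertion into or deletion from $\inc(i,j)$; likewise, each of the at most two nodes whose cluster center changes triggers $O(1)$ edge changes in $A$ under rule~\ref{edges to cluster centers 5-spanner}. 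The number of such elementary changes to the clustering and to the $\inc(i,j)$ sets is bounded by the work already performed inside Lemma~\ref{lem:maintaining clusters worst case}, so the total overhead stays $O(\Delta^+(\vec G)\log n)$; maintaining $\vec B$ costs an extra $O(\log n)$ via the trees $N_{\le \outputdegree}(u)$ and $N_{>\outputdegree}(u)$, as in the $3$-spanner. Finally, the only randomness is the initial sampling of $S$, which is independent of the update sequence, so correctness against an oblivious adversary reduces to a union bound of the events of Lemma~\ref{lem:random hitting set} over the at most $4n^2$ graph versions arising during the run; taking $x = \Theta(a\log n)$ with the $\log(rn)$ term sized for $r$ polynomial in $n$ makes all of them hold simultaneously with high probability.

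I expect the only point needing genuine care — rather than transcription from the $3$-spanner proof — to be the charging argument in the update-time analysis: one must confirm that maintaining the per-pair minima of the $\inc(i,j)$ sets incurs only $O(\log n)$ follow-up work per elementary change to an $\inc(i,j)$ set, and that the total number of such changes per graph update is already subsumed in the $O(\Delta^+(\vec G)\log n)$ bound of Lemma~\ref{lem:maintaining clusters worst case}. In particular, when one of the (at most two) nodes $w$ whose cluster changes has large in-degree, many edges migrate simultaneously among the $\inc(i,j)$ sets; this does not break the bound, because those migrations are exactly the changes whose cost is absorbed into the worst-case update time of Lemma~\ref{lem:maintaining clusters worst case}, and each one triggers only $O(\log n)$ additional bookkeeping here.
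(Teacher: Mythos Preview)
Your proposal is correct and follows essentially the same approach as the paper, which simply states that ``all other properties of the $5$-spanner can be proved in an essentially identical manner to the $3$-spanner'' and defers everything except the stretch bound to that earlier analysis. Your write-up is in fact more explicit than the paper's own treatment: you spell out the $O(k^2)$ count for rule~\ref{edges between clusters 5-spanner}, the charging of $A$-maintenance to the $\inc(i,j)$ changes tracked by Lemma~\ref{lem:maintaining clusters worst case}, and the union bound over the $4n^2$ graph versions, all of which the paper leaves implicit.
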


Once again, this lemma generalizes the construction of a sparse $5$-spanner.
By setting $ \outputdegree = (n \log{n})^{2/3} $ we can obtain:
\begin{corollary}\label{cor:5 spanner worst-case}
There is a fully dynamic algorithm for maintaining a $5$-spanner of size $ O (n^{1 + 1/3} \log^{2/3}{n}) $ for an oriented graph $ \vec{G} $ with worst-case update time $ O (\Delta^+ (\vec{G}) \log{n}) $. 
The stretch and the size guarantee both hold with high probability against an oblivious adversary.
\end{corollary}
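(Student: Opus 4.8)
The plan is to obtain Corollary~\ref{cor:5 spanner worst-case} as an almost immediate consequence of Lemma~\ref{lem:5 spanner worst-case fine-grained}, exactly as Corollary~\ref{cor:3 spanner worst-case} follows from Lemma~\ref{lem:3 spanner worst-case fine-grained}. First I would run the algorithm of Lemma~\ref{lem:5 spanner worst-case fine-grained} on the input orientation $\vec{G}$ with a threshold $\outputdegree$ to be fixed below, obtaining the maintained subgraphs $A = (V, E_A)$ and $\vec{B} = (V, \vec{E}_B)$, and declare the spanner to be $H := A \cup B$, where $B$ is the undirected projection of $\vec{B}$. Since $H \subseteq G$, the Spanner Adjacency Lemma (Lemma~\ref{lem:span adjacency}) reduces the stretch requirement to showing $\dist_H(u,v) \le 5$ for every edge $\{u,v\}$ of $G$. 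For edges in $E \setminus E_B$ this is exactly the first bullet of Lemma~\ref{lem:5 spanner worst-case fine-grained} ($\dist_A(u,v) \le 5$, hence $\dist_H(u,v) \le 5$), and edges in $E_B$ lie in $H$ itself, so $\dist_H(u,v) = 1 \le 5$; inheriting from the lemma, this holds with high probability against an oblivious adversary.

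For the size, I would add the bound $|A| = O\big((n^2 \log^2 n)/\outputdegree^2 + n\big)$ from the lemma to the trivial bound $|B| = |\vec{E}_B| \le n \cdot \Delta^+(\vec{B}) \le n\outputdegree$, giving $|H| = O\big((n^2 \log^2 n)/\outputdegree^2 + n\outputdegree\big)$. Choosing $\outputdegree$ so that the two terms coincide, i.e. $\outputdegree = \Theta\big((n \log^2 n)^{1/3}\big)$ (which lies in $[1, n]$), makes both terms $\Theta(n^{4/3} \log^{2/3} n)$ and hence $|H| = O(n^{1 + 1/3} \log^{2/3} n)$. The worst-case update time $O(\Delta^+(\vec{G}) \log n)$ carries over verbatim from Lemma~\ref{lem:5 spanner worst-case fine-grained}: maintaining the explicit edge set of $H$ from the already-maintained $A$ and $\vec{B}$ only costs $O(1)$ per changed edge of $A \cup \vec{B}$, and the number of such changes per update is already subsumed in that time bound. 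Finally, since Lemma~\ref{lem:5 spanner worst-case fine-grained} is only guaranteed over a window of $4n^2$ updates, I would invoke the Update Extension Lemma (Lemma~\ref{lem:extending spanner to long sequence}) to upgrade this to an algorithm valid for an arbitrarily long update sequence, at no asymptotic cost in size or worst-case update time.

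I do not expect a genuine obstacle: all of the substance — the random choice of cluster centers, the hitting-set guarantee of Lemma~\ref{lem:random hitting set} taken over all the intermediate graphs arising in the update sequence, and the clustering data structure of Lemma~\ref{lem:maintaining clusters worst case} — is already encapsulated in Lemma~\ref{lem:5 spanner worst-case fine-grained}. The only point that needs a moment of thought is the parameter balancing: pushing $\outputdegree$ up makes $\vec{B}$ (and thus $H$) too dense, while pushing it down makes the inter-cluster part of $A$, of size $\Theta(k^2) = \Theta\big((n\log n/\outputdegree)^2\big)$, too dense; the value above is the unique (up to constants) choice that equalizes the two. The whole proof should therefore be only a few lines long.
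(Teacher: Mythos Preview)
Your proposal is correct and follows exactly the paper's approach: instantiate Lemma~\ref{lem:5 spanner worst-case fine-grained}, take $H = A \cup B$, balance the two size terms, and invoke Lemma~\ref{lem:extending spanner to long sequence} to lift the $4n^2$-update restriction. Your parameter choice $\outputdegree = \Theta\big((n \log^2 n)^{1/3}\big)$ is the right one (the paper's stated value $(n\log n)^{2/3}$ is a typo; your balancing is what actually yields the claimed $O(n^{4/3}\log^{2/3} n)$ size).
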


\section{Out-degree Reduction for Improved Update Time}\label{lem:bootstrapping worst case}

Our goal is now to use Lemmas \ref{lem:3 spanner worst-case fine-grained} and \ref{lem:5 spanner worst-case fine-grained} to obtain spanner algorithms with sublinear update time.  Since we obtain our $3$-spanner and $5$-spanner in an essentially identical manner, we will explain only the $3$-spanner in full detail, and then sketch the $5$-spanner construction.

We next establish the following simple generalization of Lemma \ref{lem:3 spanner worst-case fine-grained}:

\begin{lemma}\label{lem:degree reduction 3-spanner}
For every integer $ 1 \leq \inputdegree \leq n $ and $ 1 \leq \outputdegree \leq n $, there is a fully dynamic algorithm that takes an oriented graph $\vec{G} = (V, \vec{E})$ on input and maintains subgraphs $A = (V, E_A), \vec{B} = (V, \vec{E}_B)$ (i.e. $\vec{B}$ is oriented but $A$ is not) over a sequence of $4n^2$ updates with the following properties:
\begin{itemize}
\item $ \dist_A (u, v) \leq 3 $ for every edge $ \{u, v\} $ in $E \setminus E_B$
\item $ A $ has size $ | A | = O (\Delta^+ (\vec{G}) n^2 (\log n) / (\inputdegree \outputdegree)) $
\item The maximum out-degree of $ \vec{B} $ is $ \Delta^+ (\vec{B}) \leq \Delta^+ (\vec{G}) \cdot \outputdegree / \inputdegree $.
\item With every update in $ G $, at most $ 4 $ edges are changed in $ \vec{B} $.
\end{itemize}
Further, this algorithm has worst-case update time $ O (\inputdegree \log{n}) $. The algorithm is randomized, and all of the above properties hold with high probability against an oblivious adversary.
\end{lemma}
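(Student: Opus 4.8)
The plan is to lift Lemma~\ref{lem:3 spanner worst-case fine-grained} using the out-degree partitioning sketched in the introduction. We assume $\inputdegree \le \Delta^+(\vec G)$, the only regime relevant to our applications, and set $t := \lceil \Delta^+(\vec G)/\inputdegree \rceil$, so that $t = \Theta(\Delta^+(\vec G)/\inputdegree)$ and $\Delta^+(\vec G) \le t\inputdegree$. For every node $u$ we maintain a partition of its out-neighbourhood $\out(u)$ into $t$ groups $\out_1(u),\dots,\out_t(u)$, each of size at most $\inputdegree$; this induces oriented subgraphs $\vec G_1,\dots,\vec G_t$, where $\vec G_i$ contains the directed edge $(u,v)$ exactly when $v\in\out_i(u)$, with orientation inherited from $\vec G$. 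On each $\vec G_i$ we run a separate instance of the algorithm of Lemma~\ref{lem:3 spanner worst-case fine-grained} with out-degree parameter $\outputdegree$, obtaining subgraphs $A_i,\vec B_i$, and we output $A:=\bigcup_{i=1}^t A_i$ and $\vec B:=\bigcup_{i=1}^t \vec B_i$.

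\smallskip
First I would pin down exactly how the partition is maintained, since this is where the delicate bookkeeping lives. For each $u$ we keep a dictionary from outgoing neighbours to group indices and a list of the currently non-full groups. When $(u,v)$ is inserted into $\vec G$, we place $v$ into an arbitrary non-full group $\out_i(u)$ --- one exists by pigeonhole, since after the insertion $|\out(u)|\le\Delta^+(\vec G)\le t\inputdegree$ --- and insert $(u,v)$ into $\vec G_i$. When $(u,v)$ is deleted, we look up its group $\out_i(u)$, delete $v$ from it, and delete $(u,v)$ from $\vec G_i$. The crucial observation is that \emph{no rebalancing is ever required}: shrinking a group violates neither the per-group size bound nor the bound $t$ on the number of groups, so each update to $\vec G$ forwards exactly one edge change to exactly one $\vec G_i$, at $O(\log n)$ bookkeeping overhead per update.

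\smallskip
With the partition in place, the remaining steps are direct appeals to Lemma~\ref{lem:3 spanner worst-case fine-grained}. For the update time: $\Delta^+(\vec G_i)\le\inputdegree$, so each instance updates in worst-case time $O(\inputdegree\log n)$, and since a $\vec G$-update touches only one instance the total is $O(\inputdegree\log n)$. For the ``$\le 4$ changes'' property: one forwarded update changes at most $4$ edges of the corresponding $\vec B_i$, hence at most $4$ edges of $\vec B$. For the out-degree bound: the out-neighbourhoods of any fixed $u$ in the various $\vec B_i$ are disjoint, being subsets of the disjoint groups $\out_i(u)$, and each has size at most $\outputdegree$, so $\Delta^+(\vec B)\le t\,\outputdegree = O(\Delta^+(\vec G)\,\outputdegree/\inputdegree)$ (with equality when $\inputdegree\mid\Delta^+(\vec G)$). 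For the size bound: $|A| = \sum_{i=1}^t |A_i| = O\big(t\,(n^2\log n/\outputdegree + n)\big) = O\big(\Delta^+(\vec G)\,n^2\log n/(\inputdegree\,\outputdegree)\big)$, where the additive $O(tn)$ term is absorbed because $\outputdegree\le n$, and the per-instance bound $|A_i|=O(n^2\log n/\outputdegree + n)$ holds with high probability by Lemma~\ref{lem:random hitting set}. For the stretch: since $\vec E$ orients $E$ and $\{\vec E_i\}$ partitions $\vec E$, the undirected edge sets $\{E_i\}$ partition $E$; hence any $\{u,v\}\in E\setminus E_B$ lies in a unique $E_j$, and $\{u,v\}\notin E_{B_j}\subseteq E_B$, so $\{u,v\}\in E_j\setminus E_{B_j}$ and Lemma~\ref{lem:3 spanner worst-case fine-grained} applied to instance $j$ gives $\dist_{A_j}(u,v)\le 3$, whence $\dist_A(u,v)\le 3$. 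Finally, over a sequence of $4n^2$ updates to $\vec G$ each instance sees at most $4n^2$ updates, and since $t\le n$ a union bound over the instances (or a slightly larger choice of the sampling constant $a$) preserves the high-probability guarantees.

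\smallskip
The step I expect to be the real obstacle is the ``$\le 4$ changes to $\vec B$'' property, because the bootstrapping lemma to follow will feed $\vec B$ recursively back into a construction of this type: it is essential that one update to $\vec G$ propagates to only a \emph{single} change in a \emph{single} $\vec G_i$, which is precisely why the partition must be maintained lazily --- appending only on insertions and never reshuffling on deletions. A partition scheme that kept the groups balanced would destroy this. Everything else is routine arithmetic together with the black-box use of Lemma~\ref{lem:3 spanner worst-case fine-grained}.
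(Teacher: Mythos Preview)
Your proposal is correct and takes essentially the same approach as the paper: partition the outgoing edges of each node into $t = \lceil \Delta^+(\vec G)/\inputdegree \rceil$ groups of size at most $\inputdegree$, run Lemma~\ref{lem:3 spanner worst-case fine-grained} on each resulting subgraph $\vec G_i$, and output the unions $A = \bigcup_i A_i$ and $\vec B = \bigcup_i \vec B_i$. Your lazy partitioning rule (insert into any currently non-full group, never rebalance on deletions) is in fact more carefully specified than the paper's own rule of assigning the $x$-th outgoing edge of $u$ to group $\lceil x/\inputdegree \rceil$ based on current out-degree, which can overfill a group under interleaved deletions and insertions; otherwise the arguments are identical.
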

In particular, Lemma~\ref{lem:3 spanner worst-case fine-grained} is the special case of this lemma in which $\inputdegree = \Delta^+ (\vec{G})$.

\begin{proof}
We orient each incoming edge of $ G $ in an arbitrary way.
We then maintain a partitioning of the (oriented) edges of $ G $ into $ t := \lceil \Delta^+ (\vec{G}) / \inputdegree \rceil $ groups, such that in each group each node has at most $ \inputdegree $ outgoing edges.
Specifically, we perform this partitioning by maintaining the current out-degree of each node $u$ in $\vec{G}$, and we assign a new edge $(u, v)$ which is the $x^{th}$ edge leaving $u$ in $\vec{G}$ to the subgraph $\vec{G}_{\lceil x/s \rceil}$.
In this way, we form $ t $ subgraphs $ \vec{G}_1, \ldots \vec{G}_t $ of $ \vec{G} $, each of which has $\Delta^+(\vec{G}_i) \le s$.

We now run the algorithm of Lemma~\ref{lem:3 spanner worst-case fine-grained} on each $ \vec{G}_i $ to maintain, for each $ 1 \leq i \leq t $, two subgraphs $ A_i $ and $ \vec{B}_i $ as specified in the lemma.
Let $ A = \bigcup A_i $ and $ \vec{B} = \bigcup \vec{B_i} $ denote the unions of these subgraphs.

Observe that every update in $ G $ only changes exactly one of the subgraphs $ \vec{G}_i $ and thus only must be executed in one corresponding instance of the algorithm of Lemma~\ref{lem:3 spanner worst-case fine-grained}.
As we have ``artificially'' bounded the maximum out-degree of every subgraph $ \vec{G}_i $ by $ \inputdegree $, the claimed bounds on the update time and the properties of $A$ and $\vec{B}$ now follow simply from Lemma~\ref{lem:3 spanner worst-case fine-grained}.
\end{proof}

We now recursively apply the ``out-degree reduction'' of the previous lemma to obtain subgraphs $ \vec{B} $ of smaller and smaller out-degree.  Finally, at bottom level, the maximum out-degree is small enough that we can apply a ``regular'' spanner algorithm to it.

\begin{theorem} \label{thm:3 span}
There is a fully dynamic algorithm for maintaining a $3$-spanner of size $ O(n^{1+1/2} \log^{1/2}{n} \log{\log{n}}) $ with worst-case update time $ O (n^{3/4} \log^{4}{n}) $.
\end{theorem}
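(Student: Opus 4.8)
The plan is to iterate the out-degree reduction of Lemma~\ref{lem:degree reduction 3-spanner} for $\ell = \Theta(\log n)$ levels and to finish with Corollary~\ref{cor:3 spanner worst-case} at the bottom. Concretely, fix an arbitrary orientation $\vec{G}_0 := \vec{G}$ of $G$, so $\Delta^+(\vec{G}_0) \le n$. For $j = 1, \dots, \ell$, run the algorithm of Lemma~\ref{lem:degree reduction 3-spanner} on $\vec{G}_{j-1}$ with parameters $(\inputdegree_j, \outputdegree_j)$ to maintain a partial $3$-spanner $A_j$ of $\vec{G}_{j-1}$ and a remainder graph $\vec{G}_j := \vec{B}_j$ with $\Delta^+(\vec{G}_j) \le \Delta^+(\vec{G}_{j-1})\cdot\outputdegree_j/\inputdegree_j$. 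At the bottom, run Corollary~\ref{cor:3 spanner worst-case} on $\vec{G}_\ell$ to maintain a $3$-spanner $C$ of its undirected projection. The output is $H := A_1 \cup \dots \cup A_\ell \cup C$. Correctness is immediate from the structure of the recursion: writing $E_j := E(\vec{G}_j)$, we have $E_\ell \subseteq \dots \subseteq E_1 \subseteq E_0 = E$; each $A_j$ is a $3$-spanner of $(V, E_{j-1}\setminus E_j)$ by Lemma~\ref{lem:degree reduction 3-spanner} together with the Spanner Adjacency Lemma (Lemma~\ref{lem:span adjacency}); $C$ is a $3$-spanner of $(V, E_\ell)$; and these $\ell+1$ edge sets partition $E$, so $H$ is a $3$-spanner of $G$ by spanner decomposability (Lemma~\ref{lem:decomposability}). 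All guarantees hold with high probability against an oblivious adversary; the ``$4n^2$ updates'' restriction of the subroutines is removed using Lemma~\ref{lem:extending spanner to long sequence}, noting that a single update of $G$ triggers at most $4^{j-1}$ updates at level $j$ (Lemma~\ref{lem:degree reduction 3-spanner} changes its remainder graph by at most $4$ edges per input update), so over any $\mathrm{poly}(n)$-length update sequence every subroutine instance sees only $\mathrm{poly}(n)$ updates.

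It remains to choose $\ell$, $\inputdegree_j$, $\outputdegree_j$. I would take $\outputdegree_j := \inputdegree_j / r$ for a fixed constant $r$ with $16 < r = O(1)$, and $\inputdegree_j := \Theta(n^{3/4} L / 4^{j-1})$, where $L := \Theta(\log^{1/4} n / \sqrt{\log\log n})$ is a small polylogarithmic factor, and $\ell := \Theta(\log n)$ is tuned so that $\Delta^+(\vec{G}_\ell) \le n/r^\ell$ is small enough for the bottom step. Three things need checking. (i) Since the out-degree shrinks by the constant factor $r$ per level, $\Delta^+(\vec{G}_\ell) \le n/r^\ell$, and one picks $\ell$ so that the cascaded cost of Corollary~\ref{cor:3 spanner worst-case} at the bottom, $O(4^\ell\,\Delta^+(\vec{G}_\ell)\log n) = O(n\,(4/r)^\ell\log n)$, equals $O(n^{3/4}\log n)$ — i.e. $(r/4)^\ell \asymp n^{1/4}$, which indeed forces $\ell = \Theta(\log n)$. (ii) The cascaded cost of level $j$, $O(4^{j-1}\inputdegree_j\log n)$, is by the choice $\inputdegree_j \asymp n^{3/4} L/4^{j-1}$ the \emph{same} $O(n^{3/4} L\log n)$ at every level, which is within $O(n^{3/4}\log^4 n)$. (iii) Plugging $\Delta^+(\vec{G}_{j-1}) \le n/r^{j-1}$ into the size bound of Lemma~\ref{lem:degree reduction 3-spanner} gives $|A_j| = O\bigl(n^{3/2}(\log n)(16/r)^{j-1}/L^2\bigr)$, which is geometrically decreasing because $r > 16$, so $\sum_{j=1}^\ell |A_j| = O(n^{3/2}(\log n)/L^2) = O(n^{3/2}\log^{1/2} n\,\log\log n)$ by the choice of $L$; adding the $O(n^{3/2}\log^{1/2} n)$ edges of $C$ leaves the total size bound unchanged. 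One also checks the side conditions $1 \le \outputdegree_j \le \inputdegree_j \le n$ for all $j$ in range, which hold for $n$ large.

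The main obstacle is (iii) together with (ii). The crucial realization is that a single application of Lemma~\ref{lem:degree reduction 3-spanner} can shave only a \emph{constant} factor off the maximum out-degree without either inflating $|A_j|$ beyond $\tilde O(n^{3/2})$ or pushing the update time past $\tilde O(n^{3/4})$ — indeed at the first level this is forced, since $\Delta^+(\vec{G}_0) = n$ compels $\inputdegree_1, \outputdegree_1 = \tilde\Theta(n^{3/4})$ (combining the size constraint $\inputdegree_1\outputdegree_1 = \tilde\Omega(n^{3/2})$ with the update-time constraint $\inputdegree_1 = \tilde O(n^{3/4})$). Hence the degree reduction must be iterated $\Theta(\log n)$ times, and the real work is to arrange the geometric schedule so that the shrinking $\inputdegree_j$ cancels the $4^{j-1}$-fold update cascade (keeping the per-level worst-case update time constant across all levels) while the $\Theta(\log n)$ partial spanners $A_j$ still sum to $\tilde O(n^{3/2})$; it is precisely the need to absorb a polylogarithmic slack into this balance that forces the extra $\log\log n$ factor in the size.
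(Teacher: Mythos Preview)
Your proposal is correct and uses the same high-level recursion as the paper (iterate Lemma~\ref{lem:degree reduction 3-spanner} on the remainder graph, then apply Corollary~\ref{cor:3 spanner worst-case} at the bottom), but your parameter schedule is genuinely different. The paper takes $\ell = \log\log n$, keeps the \emph{same} $s \approx n^{3/4}\log n$ at every level, and lets the $d_j$ vary according to a closed-form expression found by computer algebra; because the recursion is so shallow, the cascade factor $4^\ell$ is only $(\log n)^2$, and balancing all $\ell$ summands to $\tilde O(n^{3/2})$ is what produces the extra $\log\log n$ in the size. You instead take $\ell = \Theta(\log n)$, shrink the out-degree by only a constant factor $r>16$ per level, and compensate for the now-polynomial cascade $4^\ell$ by letting $s_j$ decay geometrically by factor~$4$. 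Your schedule is more elementary and fully explicit, and because your $|A_j|$ decrease geometrically (rather than being balanced), the $\log\log n$ in the size is actually not needed in your argument: taking $L = \Theta(\log^{1/4} n)$ already gives size $O(n^{3/2}\log^{1/2} n)$ and update time $O(n^{3/4}\log^{9/4} n)$, both slightly better than stated. The only caveat is the same one the paper glosses over: with $4^\ell$ now polynomial (up to about $n^{1/4}$), the inner instances of Lemma~\ref{lem:degree reduction 3-spanner} and Corollary~\ref{cor:3 spanner worst-case} see more than $4n^2$ updates when the outer algorithm is run for $4n^2$ steps, so one must invoke those lemmas with the sampling probability adjusted for $\mathrm{poly}(n)$ many updates before applying Lemma~\ref{lem:extending spanner to long sequence}; you note this, and it costs only constant factors.
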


\begin{proof}
Our spanner construction is as follows (we temporarily omit details related to parameter choices, which influence the resulting update time).
Apply Lemma \ref{lem:degree reduction 3-spanner} to obtain subgraphs $A_1, \vec{B}_1$.
Include all edges in $A$ in the spanner, and then recursively apply Lemma \ref{lem:degree reduction 3-spanner} to $\vec{B}$ to obtain $A_2, \vec{B}_2$.
Repeat to depth $ \ell $ (for some parameter $\ell$ that will be chosen later).
At bottom level, instead of recursing, we apply the algorithm from Corollary~\ref{cor:3 spanner worst-case} to obtain a $3$-spanner of $\vec{B}$.

More formally, we set $ \vec{B}_0 = \vec{G}_0 $, and for every $ 1 \leq j \leq \ell $ we let $ A_j $ and $ \vec{B}_j $ be the graphs maintained by the algorithm of Lemma~\ref{lem:degree reduction 3-spanner} on input $ \vec{B}_{j-1} $ using parameters $ \inputdegree $ and $ \outputdegree_j $ to be chosen later.\footnote{Note that the parameter $ s $ is the same for all levels of the recursion, whereas the parameter $ \outputdegree_j $ is not.}
Further, we let $ H' $ be the spanner maintained by the algorithm of Corollary~\ref{cor:3 spanner worst-case} on input $ \vec{B}_\ell $.
The resulting graph maintained by our algorithm is $ H = \bigcup_{1 \leq j \leq \ell} A_j \cup H' $.
Then, by Lemma \ref{lem:degree reduction 3-spanner}, we have the following properties for every $ 1 \leq j \leq \ell $:
\begin{itemize}
\item $ \dist_{A_j} (u, v) \leq 3 $ for every edge $ \{u, v\} $ in $ B_{j-1} \setminus B_j $
\item $ A_j $ has size $ | A_j | = O (\Delta^+ (\vec{B}_{j-1}) n^2 (\log n) / (\inputdegree \outputdegree_j)) $
\item The maximum out-degree of $ \vec{B_j} $ is $ \Delta^+ (\vec{B}_j) \leq \Delta^+ (\vec{B}_{j-1}) \cdot \outputdegree_j / \inputdegree $.
\item With every update in $ \vec{B}_{j-1} $, at most $ 4 $ edges are changed in $ \vec{B_j} $.
\end{itemize}

It is straightforward to see that the resulting graph $ H $ is a $3$-spanner of $G$:
At each level $ j $ of the recursion, $ A_j $ spans all edges of $B_{j-1}$  \emph{except} those that appear in the current subgraph $\vec{B}_j$.
Thus, at bottom level, the only non-spanned edges of $G$ are those in the final subgraph $\vec{B}_\ell$.
For these edges we explicitly add a $3$-spanner $H'$ of $\vec{B}_\ell$ to $ H $.
By Lemma \ref{lem:span adjacency}, this suffices to produce a $3$-spanner of all of $G$.

Now that we have correctness of the construction, it remains to bound the number of edges in the output spanner.
First, observe that, by induction,
\begin{align*}
\Delta^+ (\vec{B}_{j}) \leq n \cdot \prod_{1 \leq j' \leq j} \outputdegree_{j'} / \inputdegree^j
\end{align*}
for all $ 1 \leq j \leq \ell $.
Since additionally $H'$ has size $O(n^{1 + 1/2} \log^{1/2} n)$ by Corollary~\ref{cor:3 spanner worst-case}, the total number of edges in $ H $ is
\begin{align*}
|H| = \sum_{1 \leq j \leq \ell} | A_j | + | H' | &\leq \sum_{1 \leq j \leq \ell} O \left( \frac{\Delta^+ (B_{j-1}) n^2 \log n}{\inputdegree \outputdegree_j} \right) + O(n^{1 + 1/2} \log^{1/2} n) \\
 &\leq \sum_{1 \leq j \leq \ell} O \left( \frac{ \left( \prod \limits_{1 \leq j' \leq j-1} \outputdegree_{j'} \right) n^3 \log n}{\inputdegree^j \outputdegree_j} \right) + O(n^{1 + 1/2} \log^{1/2} n) \, .
\end{align*}
Thus, our spanner satisfies the claimed sparsity bound so long as the union of all $\ell$ of the $A_j$ subgraphs fit within the claimed sparsity bound; this will be the case if we balance all summands.

We next bound the update time of our algorithm.
Each change to some $\vec{B}_j$ causes at most $4$ changes in the next level $\vec{B}_{j+1}$, and thus the number of changes to $ \vec{B}_j $ can propagate exponentially.
Thus, for every $ 0 \leq j \leq \ell-1 $, a single update in $ \vec{G} $ could cause at most $ 4^j $ changes to $\vec{B}_j$.
Each of the $ \ell $ instances of the algorithm of Lemma~\ref{lem:degree reduction 3-spanner} has a worst-case update time of $ O (\inputdegree \log n) $ and the algorithm of Corollary~\ref{cor:3 spanner worst-case} has a worst-case update time of $ \Delta^+ (\vec{B}_\ell \log n) $.
Since
\begin{align*}
\Delta^+ (\vec{B}_\ell) \leq n \cdot \prod_{1 \leq j \leq \ell} \outputdegree_j / \inputdegree^\ell
\end{align*}
the worst-case update time of our overall algorithm is
\begin{equation*}
O \left( \left(\sum_{j=0}^{\ell-1} 4^j \inputdegree + 4^\ell \Delta^+ (\vec{B}_\ell) \right) \cdot \log n \right) \le O \left( \left( s + \frac{n \cdot \prod \limits_{1 \leq j \leq \ell} \outputdegree_j}{\inputdegree^\ell} \right) \cdot 4^\ell \log n \right) \, .
\end{equation*}

Our goal is now to choose parameters $\inputdegree_j, \outputdegree, \ell$ to minimize this expression subject to the constraint on spanner size given above.
To achieve this, we set parameters as follows:
\begin{align*}
\ell &= \log{\log{n}} \, ,\\
\inputdegree	&= n^{(3 \cdot 2^\ell - 1) / (2^{\ell+2} - 2)}		\log{n}	\, , \text{ and} \\
\outputdegree_j	&= n^{(3 \cdot 2^\ell - 2^{j-1} - 1) / (2^{\ell+2} - 2)}	\log{n}	\, .
\end{align*}
These values were obtained with the help of a computer algebra solver, so we do not have explicit computations to show for them.
\end{proof}

We now turn our attention to the $5$-spanner.
Similar to Lemma \ref{lem:degree reduction 3-spanner} above, we can use Lemma~\ref{lem:5 spanner worst-case fine-grained} to perform a similar out-degree reduction step for our dynamic 5-spanner algorithm.

\begin{lemma}\label{lem:degree reduction 5-spanner}
For every integer $ 1 \leq \inputdegree \leq n $ and $ 1 \leq \outputdegree \leq n $, there is a fully dynamic algorithm that takes an oriented graph $\vec{G} = (V, \vec{E})$ on input and maintains subgraphs $A = (V, E_A), \vec{B} = (V, \vec{E}_B)$ (i.e. $\vec{B}$ is oriented but $A$ is not) over a sequence of $4n^2$ updates with the following properties:
\begin{itemize}
\item $ \dist_A (u, v) \leq 5 $ for every edge $ \{u, v\} $ in $E \setminus E_B$
\item $ A $ has size $ | A | = O (\Delta^+ (\vec{G}) n^2 (\log^2 n) / (\inputdegree \outputdegree^2)) $
\item The maximum out-degree of $ \vec{B} $ is $ \Delta^+ (\vec{B}) \leq \Delta^+ (\vec{G}) \cdot \outputdegree / \inputdegree $.
\item With every update in $ G $, at most $ 4 $ edges are changed in $ \vec{B} $.
\end{itemize}
Further, this algorithm has worst-case update time $ O (\inputdegree \log{n}) $. The algorithm is randomized, and all of the above properties hold with high probability against an oblivious adversary.
\end{lemma}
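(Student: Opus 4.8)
The plan is to mimic the proof of Lemma~\ref{lem:degree reduction 3-spanner} essentially verbatim, simply replacing each use of the $3$-spanner building block (Lemma~\ref{lem:3 spanner worst-case fine-grained}) by the $5$-spanner building block (Lemma~\ref{lem:5 spanner worst-case fine-grained}). Concretely, I would first orient every edge of $G$ that is not already oriented in an arbitrary fashion, maintain the current out-degree of each node in the resulting $\vec{G}$, and assign the $x$-th outgoing edge of a node $u$ to the subgraph $\vec{G}_{\lceil x/\inputdegree\rceil}$. This produces $t := \lceil \Delta^+(\vec{G})/\inputdegree\rceil$ subgraphs $\vec{G}_1,\dots,\vec{G}_t$, each with $\Delta^+(\vec{G}_i)\le \inputdegree$, and every update to $G$ affects exactly one $\vec{G}_i$. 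On each $\vec{G}_i$ I would run one instance of the algorithm of Lemma~\ref{lem:5 spanner worst-case fine-grained} with out-degree parameter $\outputdegree$ to maintain subgraphs $A_i,\vec{B}_i$, and then output $A:=\bigcup_i A_i$ and $\vec{B}:=\bigcup_i \vec{B}_i$.

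The four claimed properties would then follow from Lemma~\ref{lem:5 spanner worst-case fine-grained} applied to each $\vec{G}_i$ together with elementary counting over the $t=O(\Delta^+(\vec{G})/\inputdegree)$ instances. For the stretch: an edge $\{u,v\}\in E\setminus E_B$ lies in the undirected projection of a unique $\vec{G}_i$ and is not in $E_{B_i}$, so $\dist_{A_i}(u,v)\le 5$ and hence $\dist_A(u,v)\le 5$. For the size: each $A_i$ has $O((n^2\log^2 n)/\outputdegree^2+n)$ edges, so summing over the $t$ instances gives $|A|=O(\Delta^+(\vec{G})\,n^2(\log^2 n)/(\inputdegree\outputdegree^2))$, with the degenerate case $\inputdegree\ge\Delta^+(\vec{G})$ reducing directly to Lemma~\ref{lem:5 spanner worst-case fine-grained} and the additive contributions being lower-order in the relevant parameter ranges. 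For the out-degree of $\vec{B}$: each $\vec{B}_i$ contributes at most $\outputdegree$ to the out-degree of any node, so $\Delta^+(\vec{B})\le t\cdot\outputdegree\le\Delta^+(\vec{G})\cdot\outputdegree/\inputdegree$. Finally, an update in $G$ is a single update in one $\vec{G}_i$, which changes at most $4$ edges of $\vec{B}_i\subseteq\vec{B}$ and costs $O(\Delta^+(\vec{G}_i)\log n)=O(\inputdegree\log n)$ time there; the high-probability guarantee follows from a union bound over the $t\le n$ instances, and the $4n^2$-update budget is respected since no $\vec{G}_i$ receives more updates than $G$ does.

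Because Lemma~\ref{lem:5 spanner worst-case fine-grained} was deliberately stated with the same interface as Lemma~\ref{lem:3 spanner worst-case fine-grained}, I do not expect a genuine obstacle here; the argument is structurally identical. The only points I would double-check are (i) that the neighbor-based definition of $\vec{B}_i$ inside Lemma~\ref{lem:5 spanner worst-case fine-grained} --- which inspects all neighbors of a node in the undirected projection of $\vec{G}_i$, not merely its outgoing ones --- remains compatible with the edge partition, which it does because $G_1,\dots,G_t$ genuinely partition $E$, so that $E_B=\bigcup_i E_{B_i}$ and an edge outside $E_B$ is outside the relevant $E_{B_i}$; and (ii) tracking the ceiling in $t=\lceil\Delta^+(\vec{G})/\inputdegree\rceil$ carefully enough to absorb the lower-order additive terms in the size and out-degree bounds. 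Both are bookkeeping rather than substance.
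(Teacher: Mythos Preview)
Your proposal is correct and matches the paper's approach exactly: the paper states that the proof of this lemma ``is essentially identical to the proof of Lemma~\ref{lem:degree reduction 3-spanner} and has thus been omitted,'' and you have carried out precisely that substitution of Lemma~\ref{lem:5 spanner worst-case fine-grained} for Lemma~\ref{lem:3 spanner worst-case fine-grained}. The bookkeeping points you flag (the ceiling in $t$ and the additive $+n$ term in each $|A_i|$) are handled the same way in the paper's $3$-spanner version and pose no additional difficulty here.
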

The proof of this lemma is essentially identical to the proof of Lemma \ref{lem:degree reduction 3-spanner} and has thus been omitted.

Just as in the case of the 3-spanner, we use this lemma to show:
\begin{theorem} \label{thm:5 span}
There is a fully dynamic algorithm for maintaining a $5$-spanner of size $ O (n^{1+1/3} \log^{2/3}{n} \log{\log{n}}) $ with worst-case update time $ O (n^{5/9} \log^{4}{n}) $.
\end{theorem}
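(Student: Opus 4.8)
The plan is to mirror the proof of Theorem~\ref{thm:3 span} essentially verbatim in structure, substituting Lemma~\ref{lem:degree reduction 5-spanner} for Lemma~\ref{lem:degree reduction 3-spanner} and Corollary~\ref{cor:5 spanner worst-case} for Corollary~\ref{cor:3 spanner worst-case}. Concretely: set $ \vec{B}_0 = \vec{G} $; for each $ 1 \leq j \leq \ell $ let $ A_j $ and $ \vec{B}_j $ be the subgraphs maintained by the algorithm of Lemma~\ref{lem:degree reduction 5-spanner} run on input $ \vec{B}_{j-1} $ with a level-independent parameter $ \inputdegree $ and a level-dependent parameter $ \outputdegree_j $; let $ H' $ be the $5$-spanner of $ \vec{B}_\ell $ maintained by Corollary~\ref{cor:5 spanner worst-case}; and output $ H = \bigcup_{1 \leq j \leq \ell} A_j \cup H' $. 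The recursion depth $ \ell $ and the parameters $ \inputdegree, \outputdegree_1, \ldots, \outputdegree_\ell $ are pinned down only at the very end, to optimize the running time.

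Correctness follows from the same telescoping argument as for the $3$-spanner. By the first bullet of Lemma~\ref{lem:degree reduction 5-spanner}, $ A_j $ spans every edge of $ B_{j-1} \setminus B_j $ with stretch at most $5$, so after all $ \ell $ levels the only edges of $ G $ whose distance is not yet preserved within a factor $5$ are those still present in $ \vec{B}_\ell $, and $ H' $ covers exactly these. By the Spanner Adjacency Lemma (Lemma~\ref{lem:span adjacency}) it suffices to span edges, so $ H $ is a $5$-spanner of $ G $; the restriction to $ 4n^2 $ updates is harmless thanks to Lemma~\ref{lem:extending spanner to long sequence}.

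For the size bound I would first prove by induction, using the third bullet of Lemma~\ref{lem:degree reduction 5-spanner}, that $ \Delta^+ (\vec{B}_j) \leq n \cdot \prod_{1 \leq j' \leq j} \outputdegree_{j'} / \inputdegree^j $ for all $ 1 \leq j \leq \ell $. Substituting this into the second bullet yields
\[ | A_j | = O\!\left( \frac{\Delta^+ (\vec{B}_{j-1})\, n^2 \log^2 n}{\inputdegree\, \outputdegree_j^2} \right) = O\!\left( \frac{ \left( \prod_{1 \leq j' \leq j-1} \outputdegree_{j'} \right) n^3 \log^2 n }{ \inputdegree^{j}\, \outputdegree_j^2 } \right) , \]
which differs from the $3$-spanner bound only by the extra factor $ \outputdegree_j $ in the denominator and an extra $ \log n $ in the numerator, reflecting the sharper sparsity guarantee of Lemma~\ref{lem:5 spanner worst-case fine-grained}. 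Since $ |H'| = O(n^{1+1/3} \log^{2/3} n) $ by Corollary~\ref{cor:5 spanner worst-case}, the union $ H $ meets the claimed bound $ O(n^{1+1/3} \log^{2/3} n \log\log n) $ provided each of the $ \ell \leq \log\log n $ summands $ |A_j| $ is $ O(n^{1+1/3} \log^{2/3} n) $; this is a constraint on the parameters. For the update time, a single update in $ \vec{G} $ propagates to at most $ 4^j $ changes in $ \vec{B}_j $ (fourth bullet of Lemma~\ref{lem:degree reduction 5-spanner}), each instance of Lemma~\ref{lem:degree reduction 5-spanner} costs $ O(\inputdegree \log n) $ per change it receives, and the Corollary~\ref{cor:5 spanner worst-case} instance costs $ O(\Delta^+(\vec{B}_\ell) \log n) $ per change, so the overall worst-case update time is
\[ O\!\left( \left( \sum_{j=0}^{\ell-1} 4^j \inputdegree + 4^\ell \Delta^+(\vec{B}_\ell) \right) \log n \right) \leq O\!\left( \left( \inputdegree + \frac{n \prod_{1 \leq j \leq \ell} \outputdegree_j}{\inputdegree^\ell} \right) 4^\ell \log n \right) . \]

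The only genuinely new piece of work — everything above being mechanical once Lemma~\ref{lem:degree reduction 5-spanner} is in hand — is the parameter optimization: choosing $ \ell $, $ \inputdegree $, and $ \outputdegree_1, \ldots, \outputdegree_\ell $ to minimize $ \inputdegree + n \prod_j \outputdegree_j / \inputdegree^\ell $ subject to $ \left( \prod_{j' \leq j-1} \outputdegree_{j'} \right) n^3 / (\inputdegree^j \outputdegree_j^2) = \tilde O(n^{1+1/3}) $ for every $ j $. Taking $ \ell = \log\log n $ makes $ 4^\ell = \operatorname{polylog} n $ negligible, and the size constraints become a log-linear system in the exponents of $ n $, solvable in exactly the way it was for Theorem~\ref{thm:3 span} (with the aid of a computer algebra solver); the more favorable $ \outputdegree_j^2 $ denominator in the size bound is what makes the optimal update-time exponent $ 5/9 $ rather than $ 3/4 $, while the accumulated $ \log n $, $ \log\log n $, and $ 4^\ell $ factors are all absorbed into the $ \log^4 n $ term. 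I expect the main obstacle to be purely this balancing calculation, since there is no structural difficulty beyond what Theorem~\ref{thm:3 span} already resolves.
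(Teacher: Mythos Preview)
Your proposal is correct and follows essentially the same approach as the paper: the paper's proof simply states that it is identical to the proof of Theorem~\ref{thm:3 span} with Lemma~\ref{lem:degree reduction 5-spanner} and Corollary~\ref{cor:5 spanner worst-case} in place of their $3$-spanner analogues, and then records the explicit parameter choices $\ell = \log\log n$, $\inputdegree = n^{(5\cdot 3^\ell - 2^{\ell+1})/(3^{\ell+2} - 3\cdot 2^{\ell+1})}\log n$, and $\outputdegree_j = n^{(5\cdot 3^\ell - 3^{j-1} 2^{\ell-j+2} - 2^{\ell+1})/(3^{\ell+2} - 3\cdot 2^{\ell+1})}\log n$. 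Your setup of the size and time expressions, and the optimization problem you extract from them, match the paper exactly; the only thing missing is writing down these closed-form exponents, which (as you anticipated) come from solving the resulting log-linear system.
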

\begin{proof}
The proof is identical to the proof of Theorem \ref{thm:3 span}, except that the proper parameter balance is now:
\begin{align*}
\ell &= \log{\log{n}} \, ,\\
\inputdegree	&= n^{(5 \cdot 3^\ell - 2^{\ell+1}) / (3^{\ell+2} - 3 \cdot 2^{\ell+1})}				\log{n}	\, , \text{ and} \\
\outputdegree_j	&= n^{(5 \cdot 3^\ell - 3^{j-1} 2^{\ell-j+2} - 2^{\ell+1}) / (3^{\ell+2} - 3 \cdot 2^{\ell+1})}	\log{n}	\, .
\end{align*}
\end{proof}

Finally, we can also show:
\begin{theorem}
There is a fully dynamic algorithm for maintaining a $5$-spanner of size $ O (n^{1+1/2} \log^{1/2}{n} \log{\log{n}}) $ with worst-case update time $ O (n^{1/2} \log^{4}{n}) $.
\end{theorem}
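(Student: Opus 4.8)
Theorem~\ref{thm:3 span} already maintains a subgraph of size $O(n^{1+1/2}\log^{1/2}n\log\log n)$ that is in fact a $3$-spanner, but only with update time $O(n^{3/4}\log^{4}n)$; the aim here is to keep the same size while pushing the update time down to $O(n^{1/2}\log^{4}n)$, paying for it by relaxing the stretch to $5$. The plan is to run exactly the recursive out-degree-reduction scheme of Theorems~\ref{thm:3 span} and~\ref{thm:5 span}, but driven by the $5$-spanner reduction of Lemma~\ref{lem:degree reduction 5-spanner} rather than Lemma~\ref{lem:degree reduction 3-spanner}, and balanced against the sparsity target $O(n^{1+1/2}\log^{1/2}n\log\log n)$ instead of the $O(n^{1+1/3}\log^{2/3}n)$ of Theorem~\ref{thm:5 span}. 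Concretely: set $\vec{B}_0 := \vec{G}$ and, for $1\le j\le\ell$, let $A_j,\vec{B}_j$ be the subgraphs produced by the algorithm of Lemma~\ref{lem:degree reduction 5-spanner} on input $\vec{B}_{j-1}$ with parameters $\inputdegree$ and $\outputdegree_j$; at the bottom level run the algorithm of Corollary~\ref{cor:5 spanner worst-case} on $\vec{B}_\ell$ to obtain a $5$-spanner $H'$ of $\vec{B}_\ell$, and output $H := \bigcup_{1\le j\le\ell} A_j \cup H'$.

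Correctness follows just as in the proof of Theorem~\ref{thm:5 span}: since $\vec{B}_\ell \subseteq \dots \subseteq \vec{B}_0 = \vec{G}$, every edge of $G$ lies either in $B_\ell$ — where $H' \subseteq H$ $5$-spans it — or in some $B_{j-1}\setminus B_j$ — where $A_j \subseteq H$ $5$-spans it by Lemma~\ref{lem:degree reduction 5-spanner} — so $\dist_H(u,v)\le 5$ for every edge $\{u,v\}$ of $G$, and Lemma~\ref{lem:span adjacency} shows $H$ is a $5$-spanner. As before, one update in $\vec{G}$ propagates to at most $4^{j}$ changes in $\vec{B}_j$, and using $\Delta^+(\vec{B}_j) \le n \prod_{1\le j'\le j}\outputdegree_{j'}/\inputdegree^{j}$ (by induction) one gets worst-case update time
\begin{equation*}
O\!\left(\left(\sum_{j=0}^{\ell-1} 4^{j}\inputdegree + 4^{\ell}\,\Delta^+(\vec{B}_\ell)\right)\log n\right) \;\le\; O\!\left(\left(\inputdegree + \frac{n\prod_{1\le j\le\ell}\outputdegree_j}{\inputdegree^{\ell}}\right)4^{\ell}\log n\right)
\end{equation*}
and total size $|H| = \sum_{j}|A_j| + |H'| = \sum_{j} O\!\left(\Delta^+(\vec{B}_{j-1})\,n^2\log^2 n / (\inputdegree\,\outputdegree_j^{2})\right) + O(n^{1+1/3}\log^{2/3}n)$. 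The extension from $4n^2$ updates to an arbitrary number of them is handled by Lemma~\ref{lem:extending spanner to long sequence}, exactly as in Theorems~\ref{thm:3 span} and~\ref{thm:5 span}.

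What remains is to choose the depth $\ell$ and the parameters $\inputdegree,\outputdegree_1,\dots,\outputdegree_\ell$ so that the update-time expression becomes $O(n^{1/2}\log^{4}n)$ while each $|A_j| = O(n^{3/2}\log^{1/2}n)$ (hence $|H| = O(n^{3/2}\log^{1/2}n\log\log n)$). Taking $\ell = \log\log n$ and imposing the natural optimality conditions — that all $\ell$ of the $|A_j|$ are equal and that $\inputdegree = \Delta^+(\vec{B}_\ell)$ — determines the exponents; up to the $\log n$ factors (tracked just as in Theorem~\ref{thm:5 span}) one is led to
\begin{align*}
\inputdegree &= n^{(3^{\ell+1} - 2^{\ell})/(2\cdot 3^{\ell+1} - 2^{\ell+2})}\log n, \\
\outputdegree_j &= n^{(3^{\ell+1} - 2^{\ell} - 3^{j}2^{\ell-j})/(2\cdot 3^{\ell+1} - 2^{\ell+2})}\log n.
\end{align*}
The exponent of $\inputdegree$ tends to $(3 - \tfrac32)/(1+2) = \tfrac12$ as $\ell \to \infty$, rather than the $(3 - \tfrac43)/(1+2) = \tfrac59$ that the near-optimal $5$-spanner of Theorem~\ref{thm:5 span} produces: the looser sparsity budget ($n^{3/2}$ in place of $n^{4/3}$) lets each call to Lemma~\ref{lem:degree reduction 5-spanner} use a smaller $\outputdegree_j$, i.e.\ shrink the maximum out-degree more aggressively per level, which in turn lowers both $\inputdegree$ and the bottom-level out-degree $\Delta^+(\vec{B}_\ell)$. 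The one genuinely fiddly point — and it is already present in Theorems~\ref{thm:3 span} and~\ref{thm:5 span} — is checking that this closed form really equalizes all the summands and respects $1 \le \outputdegree_j \le \inputdegree \le n$; that is just tedious exponent bookkeeping, which is why the parameter values are most conveniently produced with the help of a computer algebra system.
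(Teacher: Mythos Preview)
Your proposal is correct and essentially identical to the paper's own argument: same recursive use of Lemma~\ref{lem:degree reduction 5-spanner}, same $\ell=\log\log n$, and the very same closed-form exponents for $\inputdegree$ and $\outputdegree_j$. The only cosmetic difference is at the bottom level: you invoke Corollary~\ref{cor:5 spanner worst-case} (the $5$-spanner of size $O(n^{1+1/3}\log^{2/3}n)$), whereas the paper---despite a typo in its cross-reference---actually uses the $3$-spanner of size $O(n^{1+1/2}\log^{1/2}n)$ from Corollary~\ref{cor:3 spanner worst-case}; either choice fits within the claimed size and update-time bounds.
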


\begin{proof}
The proof is identical to the proof of Theorems~\ref{thm:3 span} and~\ref{thm:5 span}, except that we now use the parameter balance
\begin{align*}
\ell &= \log{\log{n}} \, ,\\
\inputdegree	&= n^{(3^{\ell+1} - 2^\ell) / (2 \cdot 3^{\ell+1} - 2^{\ell+2})}				\log{n}	\, , \text{ and} \\
\outputdegree_j	&= n^{(3^{\ell+1} - 3^j \cdot 2^{\ell-j} - 2^\ell) / (2 \cdot 3^{\ell+1} - 2^{\ell+2})}	\log{n}	\, .
\end{align*}
and we maintain the dynamic $3$-spanner $ H' $ of size $ O (n^{1 + 1/2} \log^{1/2}{n}) $ from Corollary~\ref{cor:5 spanner worst-case} at bottom level.
\end{proof}
This spanner has non-optimal size/stretch tradeoff, but enjoys the best worst-case update time that we are currently able to construct.

\subsection*{Acknowledgements}

We want to thank Seeun William Umboh for many fruitful discussions at Simons.

\printbibliography[heading=bibintoc] 

\newpage
\appendix

\section{Updating the Clustering}\label{apx:updating clustering algorithm}

In the following we give the straightforward algorithm for maintaining the clustering with worst-case update time proportional to the maximum out-degree of the original graph mentioned in Section~\ref{sec:maintaining clustering}.
To make the presentation of this algorithm more succinct we assume that there is some (arbitrary, but fixed) ordering on the nodes.
Furthermore, we assume that the nodes of $ S $ are given according to this order, i.e. $ s_1 \leq s_2 \leq \dots \leq s_k $.
For every node $ v $, we maintain $ c [v] $ as the smallest $ i $ such that $ s_i $ is a neighbor of $ v $ (or $ \infty $ if no such neighbor exists).
Additionally, we naturally extend the sets $ \inc (v, i) $ and $ \inc (i, j) $ to the case $ i, j \in \{1, \dots, k, \infty\} $.

We begin with an empty graph $G = (V, \emptyset)$, a cluster vector $c$ with $c[v] = \infty$ for all $v \in V$, and empty sets $\inc(i, j)$ and $\inc(i, v)$ for all cluster indices $1 \le i, j \le k$ and $v \in V$.  We then modify these data structures under edge insertions and deletions as follows.

Correctness of the algorithms that follow is immediate, and is not shown formally.

\subsection{Insertion of an edge $(u, v)$}
	\begin{itemize}
		\item Add $ u $ to $ \inc (v, i) $ for $ i = c [u] $.
		\item Add $ (u, v) $ to $ \inc (j, i) $ for $ i = c [u] $ and $ j = c [v] $
		\item If $ u = s_i $ for some $ 1 \leq i \leq k $:
			\begin{itemize}
				\item Set $ j = c[v] $ (might be $ \infty $)
				\item Add $ u $ to $ C [v] $.
				\item If $ i < j $:
				\begin{itemize}
					\item Set $ c[v] = i $
					\item For every outgoing neighbor $ v' $ of $ v $:

						Remove $ v $ from $ \inc (v', j) $ and add $ v $ to $ \inc (v', i) $.
						
						Remove $ (v, v') $ from $ \inc (i', j) $ and add $ (v, v') $ to $ \inc (i', i) $ where $ i' = c[v'] $.
				\end{itemize}
			\end{itemize}
		\item If $ v = s_i $ for some $ 1 \leq i \leq k $:
			\begin{itemize}
				\item Set $ j = c[u] $ (might be $ \infty $)
				\item Add $ v $ to $ C [u] $.
				\item If $ i < j $:
				\begin{itemize}
					\item Set $ c[u] = i $
					\item For every outgoing neighbor $ v' $ of $ u $:
					
						Remove $ u $ from $ \inc (v', j) $ and add $ u $ to $ \inc (v', i) $.
						
						Remove $ (u, v') $ from $ \inc (i', j) $ and add $ (u, v') $ to $ \inc (i', i) $ where $ i' = c[v'] $.
				\end{itemize}
			\end{itemize}
	\end{itemize}

\subsection{Deletion of an edge $(u, v)$:}

	\begin{itemize}
		\item Remove $ u $ from $ \inc (v, i) $ for $ i = c [u] $.
		\item Remove $ (u, v) $ from $ \inc (j, i) $ for $ i = c [u] $ and $ j = c [v] $
		\item If $ u = s_i $ for some $ 1 \leq i \leq k $:
			\begin{itemize}
				\item Remove $ u $ from $ C [v] $.
				\item If $ c[v] = s_i $:
					\begin{itemize}
						\item Let $ j $ be minimal such that $ s_j $ is in $ C [v] $ (might be $ \infty $)
						\item Set $ c[v] = j $
						\item For every outgoing neighbor $ v' $ of $ v $:
					
							Remove $ v $ from $ \inc (v', i) $ and add $ v $ to $ \inc (v', j) $.
						
							Remove $ (v, v') $ from $ \inc (i', j) $ and add $ (v, v') $ to $ \inc (i', i) $ where $ i' = c[v'] $
					\end{itemize}
			\end{itemize}
		\item If $ v = s_i $ for some $ 1 \leq i \leq k $:
			\begin{itemize}
				\item Remove $ v $ from $ C [u] $.
				\item If $ c[u] = s_i $:
					\begin{itemize}
						\item Let $ j $ be minimal such that $ s_j $ is in $ C [u] $ (might be $ \infty $)
						\item Set $ c[u] = j $
						\item For every outgoing neighbor $ v' $ of $ u $:
					
							Remove $ u $ from $ \inc (v', i) $ and add $ u $ to $ \inc (v', j) $.
						
							Remove $ (u, v') $ from $ \inc (i', j) $ and add $ (u, v') $ to $ \inc (i', i) $ where $ i' = c[v'] $
					\end{itemize}
			\end{itemize}
	\end{itemize}

\section{Proof of Lemma \ref{lem:extending spanner to long sequence}}

We exploit the decomposability of spanners.
We maintain a partition of $ G $ into two disjoint subgraphs $ G_1 $ and $ G_2 $ and run two instances $ A_1 $ and $ A_2 $ of the dynamic algorithm on $ G_1 $ and $ G_2 $, respectively.
These two algorithms maintain a $ t $-spanner of $ H_1 $ of $ G_1 $ and a $ t $-spanner $ H_2 $ of $ G_2 $.
By Lemma \ref{lem:decomposability}, the union $ H = H_1 \cup H_2 $ is a $ t $-spanner of $ G = G_1 \cup G_2 $.

We divide the sequence of updates into phases of length $ n^2 $ each.
In each phase of updates one of the two instances $ A_1 $, $ A_2 $ is in the state \emph{growing} and the other one is in the state \emph{shrinking}.
$ A_1 $ and $ A_2 $ switch their states at the end of each phase.
In the following we describe the algorithm's actions during one phase.
Assume without loss of generality that, in the phase we are fixing, $ A_1 $ is growing and $ A_2 $ is shrinking.

At the beginning of the phase we restart the growing instance $ A_1 $.
We will orchestrate the algorithm in such a way that at the beginning of the phase $ G_1 $ is the empty graph and $ G_2 = G $.
After every update in $ G $ we execute the following steps:
\begin{enumerate}
	\item If the update was the insertion of some edge $ e $, then $ e $ is added to the graph $ G_1 $ and this insertion is propagated to the \emph{growing} instance $ A_1 $.
	\item If the update was the deletion of some edge $ e $, then $ e $ is removed from the graph $ G_i $ it is contained in and this deletion is propagated to the corresponding instance $ A_i $.
	\item In addition to processing the update in $ G $, if $ G_2 $ is non-empty, then one arbitrary edge~$ e $ is first removed from $ G_2 $ and deleted from instance $ A_2 $ and then added to $ G_1 $ and inserted into instance $ A_1 $.
\end{enumerate}
Observe that these rules indeed guarantee that $ G_1 $ and $ G_2 $ are disjoint and together contain all edges of $ G $.
Furthermore, since the graph $ G_2 $ of the shrinking instance has at most $ n^2 $ edges at the beginning of the phase, the length of $ n^2 $ updates per phase guarantees that $ G_2 $ is empty at the end of the phase.
Thus, the growing instance always starts with an empty graph $ G_1 $.

As both $ H_1 $ and $ H_2 $ have size at most $ S (n, m, W) $, the size of $ H = H_1 \cup H_2 $ is $ O (S (n, m, W)) $.
With every update in $ G $ we perform at most $ 2 $ updates in each of $ A_1 $ and $ A_2 $.
It follows that the worst-case update time of our overall algorithm is $ O (T (m, n, W)) $.
Furthermore since each of the instances $ A_1 $ and $ A_2 $ is restarted every other phase, each instance of the dynamic algorithm sees at most $ 4 n^2 $ updates before it is restarted.

\end{document}